\documentclass{amsproc}

\usepackage{bm} 
\usepackage{url}
\usepackage{caption}
\usepackage{subcaption}
\usepackage{thmtools}
\usepackage{thm-restate}
\usepackage{mathtools}

\usepackage[table]{xcolor}
\usepackage{colortbl}

\usepackage{hyperref}
\usepackage[colorinlistoftodos]{todonotes}

\usepackage{xspace}
\usepackage{algorithmicx}
\usepackage{algorithm}
\usepackage{algpseudocode}
\usepackage[shortlabels]{enumitem}

\usepackage{tikz}
\usetikzlibrary{calc}
\usetikzlibrary{shapes,decorations.markings,arrows.meta,fit}

\usepackage[normalem]{ulem}

\usepackage{microtype}

\usepackage{qtree}

\usepackage{relsize}

\newcommand{\rai}{RelationalAI}

\newcommand{\norm}[1]{\|#1\|}
\newcommand{\set}[1]{\{#1\}}                    
\newcommand{\dom}{\textsf{Dom}}

\newcommand{\polylog}{\text{\sf polylog}}

\newcommand{\opt}{\text{\sf OPT}}

\newcommand{\calS}{\mathcal S}

\newcommand{\calB}{\mathcal B}

\newcommand{\calD}{\mathcal D}
\newcommand{\calL}{\mathcal L}

\newcommand{\calT}{\mathcal T}

\newtheorem{theorem}{Theorem}[section]
\newtheorem{lemma}[theorem]{Lemma}

\theoremstyle{definition}              

\newtheorem{thm}[theorem]{Theorem}

\newcommand{\defeq}{\stackrel{\text{def}}{=}}

\newcommand{\N}{\mathbb N} 
\newcommand{\R}{\mathbb R} 

\newcommand{\cd}{\text{ :- }}

\newcommand{\td}{\text{\sf TD}}

\newcommand{\bags}{\text{\sf bags}}

\newcommand{\OUT}{\text{\sf OUT}}

\newcommand{\atoms}{\text{\sf atoms}}

\newcommand{\bs}{\text{\sf BS}}

\newcommand{\panda}{\textsf{PANDA}\xspace}

\newcommand{\pandaexpress}{\textsf{PANDAExpress}\xspace}

\newcommand{\fhtw}{\textsf{fhtw}}
\newcommand{\subw}{\textsf{subw}}
\newcommand{\cost}{\textsf{cost}}

\newcommand{\ov}{\overline}

\newcommand{\positiveterm}[1]{\fcolorbox{black}{red!35}{#1}}
\newcommand{\negativeterm}[1]{\colorbox{blue!35}{#1}}

\newcommand{\scq}{{\sf{\#CQ}}\xspace}
\newcommand{\scqs}{{\sf{\#CQs}}\xspace}
\newcommand{\faq}{{\sf{FAQ}}\xspace}
\newcommand{\faqs}{{\sf{FAQs}}\xspace}
\newcommand{\cycle}{Q_\square}
\newcommand{\cyclestats}{\calS_\square}
\newcommand{\cyclefull}{Q_\square^{\mathsf{full}}}
\newcommand{\cyclefullstats}{\calS_\square^{\mathsf{full}}}
\newcommand{\cyclebool}{Q_\square^{\mathsf{bool}}}

\newcommand{\mm}{\mathsf{MM}}
\newcommand{\osubw}{\textsf{$\omega$-subw}}

\newcommand{\entw}{\textsf{entw}}

\newcommand{\np}{\mathrm{NP}}
\newcommand{\ci}{\mathrm{CI}}

\colorlet{lightgray}{gray!50!white}
\colorlet{lightergray}{gray!20!white}

\usepackage{fullpage}
\usepackage{amssymb}
\newcommand{\myemail}[1]{\small\texttt{#1}}

\allowdisplaybreaks

\begin{document}

\title{Query Optimization and Evaluation via Information Theory: A Tutorial}
\author{
  \begin{tabular}{c}
    Mahmoud Abo Khamis\\
    \rai\\
    \myemail{mahmoud.abokhamis@relational.ai}
  \end{tabular}
  \quad
  \begin{tabular}{c}
    Hung Q. Ngo\\
    \rai\\
    \myemail{hung.q.ngo@gmail.com}
  \end{tabular}
  \quad
  \begin{tabular}{c}
    Dan Suciu\\
    University of Washington\\
    \myemail{suciu@cs.washington.edu}
  \end{tabular}
}

\begin{abstract}
Database theory is exciting because it studies highly general and practically useful
abstractions. Conjunctive query (CQ) evaluation is a prime example: it simultaneously
generalizes graph pattern matching, constraint satisfaction, and statistical inference,
among others. This generality is both the strength and the central challenge of the field.
The query optimization and evaluation problem is fundamentally a \emph{meta-algorithm}
problem: given a query $Q$ and statistics $\calS$ about the input database, how should one
best answer $Q$? Because the problem is so general, it is often impossible for such a
meta-algorithm to match the runtimes of specialized algorithms designed for a fixed
query---or so it seemed.

The past fifteen years have witnessed an exciting development in database
theory: a general framework, called $\panda$, that emerged from advances in
database theory, constraint satisfaction problems (CSP), and graph algorithms,
for evaluating conjunctive queries given input data statistics. The key idea is to derive information-theoretically
tight upper bounds on the cardinalities of intermediate relations produced during
query evaluation. These bounds determine the costs of query plans, and
crucially, the query plans themselves are derived directly from the mathematical
\emph{proof} of the upper bound. This tight coupling of proof and algorithm is
what makes $\panda$ both principled and powerful.

Remarkably, this generic algorithm matches---and in some cases subsumes---the
runtimes of specialized algorithms for the same problems, including algorithms
that exploit fast matrix multiplication.
This paper is a tutorial on the $\panda$ framework. We illustrate the key ideas
through concrete examples, conveying the main intuitions behind the theory.
\end{abstract}

\keywords{conjunctive queries, query optimization, query evaluation, information theory,
submodular width, adaptive query plans, data partitioning}

\maketitle

\section{Introduction}
\label{sec:intro}

Conjunctive queries (CQs) are a remarkably versatile syntactic abstraction that arises
naturally across many areas of computer science, including database query
evaluation~\cite{DBLP:books/aw/AbiteboulHV95,DBLP:conf/stoc/ChandraM77}, constraint
satisfaction~\cite{DBLP:conf/pods/KolaitisV98,DBLP:journals/jacm/Marx13,DBLP:journals/talg/GroheM14}, machine learning
and tensor computation~\cite{DBLP:conf/pods/KhamisNR16}, and
logic~\cite{DBLP:conf/stoc/ChandraM77}. A CQ is essentially a join query: it asks for all
combinations of tuples, one from each relation, that agree on their shared variables, and
then projects the result onto a chosen subset of variables. For example, the following are
two CQs over four relations $R, S, T, U$:
\begin{align}
    \cyclefull(X, Y, Z, W) &\quad\cd\quad R(X, Y) \wedge S(Y, Z) \wedge T(Z, W) \wedge U(W, X)
    \label{eq:4cycle:full}\\
    \cycle(X, Y) &\quad\cd\quad R(X, Y) \wedge S(Y, Z) \wedge T(Z, W) \wedge U(W, X)
    \label{eq:4cycle}
\end{align}
Given a {\em database instance $\calD$} with {\em relation instances} (i.e.~instantiations
of the tables) $R, S, T, U$, the first query $\cyclefull$ asks for all tuples $(X, Y, Z, W)$
such that the conjunction $R(X, Y) \wedge S(Y, Z) \wedge T(Z, W) \wedge U(W, X)$ holds true
in $\calD$. The second query $\cycle$ asks for all tuples $(X, Y)$ for which there exist $Z,
W$ making the same conjunction hold. Figure~\ref{fig:4cycle:data} shows an example input
database instance along with the output of $\cyclefull$ over this instance.

This syntactic abstraction is powerful because it captures a wide range of problems. For
example, if we view the relations $R, S, T, U$ as edges in a graph, then $\cyclefull$ asks
for all 4-cycles in the graph, while $\cycle$ asks for all edges that can be extended to
form a 4-cycle. In fact, any graph pattern matching problem can be expressed as a CQ
evaluation problem. Moreover, CQs also capture constraint satisfaction problems (CSPs) where
the relations represent constraints on variable assignments. By changing the semiring from
Boolean to another semiring, CQs can also capture certain statistical inference problems
where the relations represent probabilistic dependencies~\cite{provenance-semirings,DBLP:conf/pods/GreenT17,DBLP:conf/pods/KhamisNR16}.

Despite this common abstraction, different communities have developed their own ways of
parameterizing the input and measuring algorithmic efficiency. In graph pattern matching,
complexity is typically expressed in terms of the number of vertices and edges. In tensor
and matrix computation, it is the dimensionality of the tensors. In databases, it is various
statistics collected from the input relations, such as relation sizes, the number of
distinct values per column, or functional dependencies. Each community has also developed
its own algorithmic toolkit. The graph algorithm community, in particular, has produced
efficient algorithms tailored to specific queries or small classes of
queries~\cite{DBLP:conf/stoc/Itai77, DBLP:journals/algorithmica/AlonYZ97,
DBLP:journals/siamcomp/DalirrooyfardVW21, 10.1145/3618260.3649663,
DBLP:conf/stoc/BringmannG25}. The CSP community has devised algorithms based on {\em tree
decompositions}~\cite{GOTTLOB2002579,DBLP:journals/talg/GroheM14,DBLP:journals/jacm/Marx13},
which exploit the structural properties of the query hypergraph to decompose the problem
into smaller subproblems. The database community, on the other hand, relies on {\em query
plans} --- strategies for ordering and combining join
operations~\cite{DBLP:books/aw/AbiteboulHV95,Maier:1983:TRD:1097039,DBLP:books/daglib/0011128}.
We review these developments in some details in Section~\ref{sec:related}. All in all, there
is a surprisingly diverse set of input parameters, algorithm representations, and analytical
frameworks for studying algorithmic efficiency, despite the fact that these communities are
solving the same underlying problem: ``Given inputs satisfying certain statistics, how can
we best evaluate a CQ?''

There is, however, a fundamental difference between the database setting and those of other
communities. In databases, the query optimizer does not know the query or the statistics in
advance: it is a {\em meta-algorithm} that, upon being given a query $Q$ and statistics
$\calS$, produces an efficient evaluation algorithm on the fly. This is in stark contrast
with the typical algorithmic setting, where the query (or a small class of queries, such as
$k$-cycles for a fixed $k$) is fixed ahead of time and the algorithm is designed
specifically for it. This distinction makes it especially important from the database
viewpoint to develop a {\em unified abstraction} --- one that can represent arbitrary input
statistics $\calS$ and arbitrary queries $Q$, reason about the complexity of evaluating $Q$
under $\calS$ in a query- and statistics-agnostic way, and at the same time describe the
class of algorithms that the optimizer can output, so that we can both bound the best
achievable complexity and construct an algorithm that meets it.

It is thus quite remarkable that, over the past 15 years or so, a unified query optimization
and evaluation framework --- called
$\panda$~\cite{DBLP:conf/pods/Khamis0S17,theoretics:13722,pandaexpress} --- has {\em emerged
from and built upon advances} across the graph algorithms, CSP, and database communities.
This framework models a wide class of input statistics $\calS$, provides a general
representation for the class of algorithms that the optimizer can output, and yields tight
complexity analyses --- all in a query- and statistics-agnostic manner. Moreover, it
subsumes and explains much of what was previously known from each of these communities as
special cases.

The $\panda$ framework has four key components: input statistics, query plan representation,
cost estimation, and query plan search. We briefly describe these components here, and will
elaborate on them in the subsequent sections.

$\panda$ begins with an abstraction for representing input statistics, called {\em degree
constraints}~\cite{DBLP:conf/pods/KhamisNS16}. At a high level, a degree constraint on an
input relation bounds the number of distinct tuples over some subset of columns, for any
fixed assignment to another subset of columns. For example, an upper bound on the size of a
relation is a degree constraint (called a {\em cardinality constraint}); so is a {\em
functional dependency} (FD) $W \to X$ on a relation $U(W, X)$, which asserts that each
$W$-value determines at most one $X$-value; and so is an upper bound $C$ on the {\em maximum
degree}, i.e.~the number of $X$-values associated with any fixed $W$-value in $U(W, X)$.
This is a surprisingly general and robust notion: in graph pattern matching, it captures the
number of edges or vertices, and outdegrees; in tensor and matrix computation, it captures dimensionality and sparsity;
and in database systems, it subsumes common statistics such as relation sizes, the number of
distinct values per column, and functional dependencies. There are also other more exotic
forms of statistics called $\ell_p$-norm constraints that fit naturally into this
framework~\cite{2021arXiv211201003V,10.1145/3651597}.

The second component of $\panda$ is a representation of query plans. {\em Tree
decompositions} (TDs), which grew out of the
CSP~\cite{GOTTLOB2002579,DBLP:journals/talg/GroheM14} and probabilistic graphical
models~\cite{DBLP:books/daglib/0023091} communities, provide a general and principled
representation of query plans, subsuming join trees in
databases~\cite{DBLP:conf/vldb/Yannakakis81,DBLP:conf/pods/GottlobGLS16}. Roughly speaking, a TD of a CQ represents a
structured network of joins and projections that computes the query, where the structure is
captured by a tree. We refer to query plans based on a single TD a ``static'' query plan.
The query plan representation in $\panda$ is not static, however. Drawing from a brilliant
idea of Marx~\cite{DBLP:journals/jacm/Marx13}, a $\panda$ query plan involves {\em multiple}
tree decompositions, with the data appropriately partitioned to load-balance the work across
them. These are also called ``dynamic'' query plans. Somewhat surprisingly, this richer
representation is powerful enough to capture state-of-the-art results from graph pattern
matching such as those
in~\cite{DBLP:journals/algorithmica/AlonYZ97,DBLP:conf/stoc/BringmannG25}. Having multiple
join trees in a query plan is a highly novel idea from a database theory and systems
perspective, and as we shall see, it is crucial for achieving the best possible runtimes for
certain queries. A major building block of dynamic query plans is the ability to evaluate
{\em disjunctive datalog rules} (DDRs) --- datalog rules with disjunctions in the head ---
which are more general than CQs.

Having defined the search space of query plans, the next challenge is to {\em cost} a plan
from this space and {\em search} for the best one --- a process that is quintessentially
database management, and is known as the {\em query optimization
problem}~\cite{DBLP:conf/pods/Chaudhuri98,DBLP:journals/ftdb/DingNC24}. Roughly, our cost function measures the largest
cardinality over all intermediate relations computed during the execution of a query plan.
However, cardinality estimation is notoriously the ``Achilles' heel of query
optimizers''~\cite{DBLP:journals/pvldb/LeisGMBK015}: it is a difficult problem that has
always been a challenge despite decades of research and development. Fortunately,
information theory comes to the rescue. Drawing from a long line of work on using entropic
arguments to bound the number of occurrences of subgraph
patterns~\cite{MR599482,MR859293,MR1639767,DBLP:journals/siamcomp/AtseriasGM13}, our
cardinality estimate for an intermediate relation is the worst-case size of that relation,
over all databases satisfying the input degree constraints, when computed using the given
query plan. Computing this estimate is itself an optimization problem over the so-called
{\em entropic functions} --- functions that arise as joint entropy vectors of collections of
random variables. Since characterizing entropic functions is a major open problem in
information theory~\cite{kolaitis_et_al:DagRep.12.7.180}, we relax this optimization to one
over the larger class of {\em polymatroids}, subject to a particular {\em Shannon
inequality} constraint. This problem is both tractable and yields provably tight bounds in
many cases of interest.

Now that we know how to estimate the cost of a query plan, the optimizer must search for the
optimal one. This is where another key idea in $\panda$ comes into play. Rather than
searching over query plans independently of the cost estimation, $\panda$ exploits the {\em
symbolic proof} of the Shannon inequality constraint used in the cardinality estimation
step, turning each proof step into an atomic relational operator --- aggregation, partition,
projection, or join --- which together form the query plan whose cost is exactly what the
optimization problem predicts. The idea of turning a mathematical proof of a cardinality
bound into an algorithm was pioneered in the design of {\em worst-case optimal join
algorithms}~\cite{DBLP:conf/pods/NgoPRR12,DBLP:journals/sigmod/NgoRR13,10.1145/3180143}.
However, $\panda$ exploits a fundamentally different bound, one based on {\em
polymatroids} and {\em sub-probability measures}, which enables it to handle the richer
query plan representation involving multiple tree decompositions.

Putting it all together, $\panda$ can evaluate any conjunctive query $Q$ on a database
instance satisfying degree constraints $\calS$ in time $O(N^{\subw(Q,\calS)} \cdot \log N +
\OUT)$, where $N$ is the input size, $\OUT$ is the output size, and $\subw(Q,\calS)$ is a
parameter --- called the {\em submodular width} --- that depends only on the query and the
statistics. Remarkably, this runtime simultaneously matches several state-of-the-art
results: it recovers the best known combinatorial runtimes for graph pattern matching,
including those in~\cite{DBLP:journals/algorithmica/AlonYZ97,DBLP:conf/stoc/BringmannG25};
and it achieves the submodular width runtime predicted by Marx's framework for tractable
CSPs~\cite{DBLP:journals/jacm/Marx13}. By adding matrix multiplication as an additional
operator to the query plan, $\panda$ can further match algorithms based on fast matrix
multiplication~\cite{DBLP:journals/pacmmod/KhamisHS25}. It also extends naturally to handle
$\ell_p$-norm constraints, which strictly generalize degree
constraints~\cite{2021arXiv211201003V,10.1145/3651597}. Crucially,
$\panda$ achieves this generality for arbitrary conjunctive queries --- not just Boolean or
full CQs.

This paper aims to be a tutorial on this thrilling development. We shall focus on examples
that illustrate the key ideas, referring the readers to
the original
PANDA papers~\cite{DBLP:conf/pods/Khamis0S17,
theoretics:13722,
pandaexpress,
DBLP:journals/pacmmod/KhamisHS25}
for more details and technicalities. We hope that this tutorial will make the $\panda$ framework
more accessible to a wider audience, and inspire further research on this exciting topic.


\paragraph*{\bf Paper Organization}
In Section~\ref{sec:related}, we review more related work on query evaluation as well as
lower bounds. In Section~\ref{sec:prelims}, we present necessary preliminaries on
statistics, information theory, and tree decompositions.
Section~\ref{sec:static:query:plans} presents information-theoretic cost estimate bounds
for static (single-TD) query plans. Section~\ref{sec:adaptive:query:plans} extends this to
adaptive (multi-TD) query plans, introduces DDRs, and derives the submodular width as the
key complexity measure. Section~\ref{sec:subw} explains how to compute the submodular width
via linear programming, and how {\em Shannon-flow inequalities} arise as the dual
certificates. Before we describe the $\panda$ algorithm for evaluating DDRs in
Section~\ref{sec:panda}, we present in Section~\ref{sec:proof:sequences} the notion of proof
sequences for establishing Shannon-flow inequalities. This completes the description of the
end-to-end $\panda$ algorithm for CQ evaluation.
Section~\ref{sec:extensions} discusses
various extensions of $\panda$ to handle other classes of queries, statistics, as well as
FMM. We conclude in Section~\ref{sec:open-problems} with some open problems.

\section{Related Work}
\label{sec:related}

\subsection{Query evaluation algorithms}
\label{sec:related:algorithms}

A special case of conjunctive query evaluation is the problem of finding a small
(hyper)graph pattern inside a large graph. A long line of work on this
problem~\cite{DBLP:conf/stoc/Itai77,MR599482,
MR859293,MR1639767,MR2104047,DBLP:conf/soda/GroheM06,
DBLP:journals/talg/GroheM14,DBLP:journals/siamcomp/AtseriasGM13} lead to the {\em
AGM-bound}~\cite{DBLP:journals/siamcomp/AtseriasGM13} for the worst-case output size of
conjunctive queries. The AGM-bound assumes that we know the sizes of the input relations.
Later works generalized and derived tighter bounds by incorporating more statistics about
the input such as functional dependencies~\cite{DBLP:journals/jacm/GottlobLVV12}, degree
constraints~\cite{DBLP:conf/pods/KhamisNS16}, and $\ell_p$-norms of the degree
vectors~\cite{10.1145/3651597}. The most general computable bound is the {\em polymatroid
bound}~\cite{DBLP:conf/pods/Khamis0S17}, which was extended to deal with $\ell_p$-norms
in~\cite{10.1145/3651597}.

Proofs of these bounds have led to novel join algorithms. The AGM-bound was shown
in~\cite{DBLP:conf/pods/NgoPRR12} to be equivalent to an isoperimetric inequality  by
Bollob\'as and Thomason~\cite{MR2104047}. Bollob\'as-Thomason's inductive proof of the
inequality suggests an inductive proof of the AGM-bound via the query-decomposition
lemma~\cite{DBLP:journals/sigmod/NgoRR13,DBLP:conf/pods/NgoPRR12}. This proof structure is
the basis of the first wave of worst-case optimal join
algorithms~\cite{DBLP:journals/sigmod/NgoRR13,DBLP:conf/pods/NgoPRR12,DBLP:conf/icdt/Veldhuizen14}.
These are algorithms running in time proportional to the AGM-bound.

When adding more statistics to tighten the output size
bound~\cite{DBLP:journals/jacm/GottlobLVV12,DBLP:conf/pods/KhamisNS16}, we needed a
different proof technique based on information theory. The entropy argument pioneered
in~\cite{MR859293} was adapted
in~\cite{DBLP:journals/siamcomp/AtseriasGM13,DBLP:journals/jacm/GottlobLVV12,DBLP:conf/pods/KhamisNS16}
to prove the AGM-bound and its generalizations. The most general bound dealing with
arbitrary degree constraints, called the {\em polymatroid bound}, was proved
in~\cite{DBLP:conf/pods/KhamisNS16}, which was also the first paper to introduce the idea of
turning a proof of the polymatroid bound into an algorithm. Three algorithms were proposed
(CA, SMA, and CSMA) based on different ``proof sequences'' of the polymatroid bound with
increasing power. Each step of a proof sequence is interpreted as a symbolic instruction to
guide the join algorithm. This is the first known instance of query plans devised from
information theoretic proofs. The types of proof sequences introduced were not complete,
however.

The bounds and results mentioned above apply to full conjunctive queries, where worst-case
optimality makes sense. This concept breaks down when the query has projections. As an
extreme case consider Boolean conjunctive queries where the output size is $O(1)$, there is
no hope of having a worst-case optimal algorithm. To deal with this issue, the constraint
satisfaction and machine learning communities have introduced query plans based on a
tree-decomposition of the input query since the
90s~\cite{zhang1994simple,DBLP:journals/ai/Dechter99}, where dynamic programming helps
answer a Boolean CQ in time $\tilde O(N^w)$ where $w$ is the ``width'' of a query.
There are
queries where the output size is very large (e.g. $N^n$) while the width is small (e.g.
$w=1$). These widths evolved from tree width\footnote{For tree width, the runtime is stated as $\tilde O(N^{w+1})$.}, (generalized) hypertree
width~\cite{DBLP:conf/pods/GottlobGLS16}, fractional hypertree
width~\cite{DBLP:conf/soda/GroheM06}, to ultimately the submodular
width~\cite{DBLP:journals/jacm/Marx13}.

While earlier width notions were based on query plans constructed from a single
tree-decomposition, the {\em submodular width} from Marx~\cite{DBLP:journals/jacm/Marx13}
was based on a beautiful idea that having multiple tree-decompositions to distribute the
computational load is crucial in reducing the runtime for answering Boolean CQs. Let
$\subw(Q)$ denote the submodular width of a Boolean query $Q$, then Marx's algorithm runs in
time $\tilde O(N^{c \cdot \subw(Q)})$ to answer $Q$ on any database of size $N$, and $c>1$
is a constant. Marx's algorithm does not deal with degree constraints and arbitrary free
variables.

The notion of submodular width is very powerful. In addition to it being optimal from a
tractability of CSP point of view~\cite{DBLP:journals/jacm/Marx13}, it also is the precise
{\em optimal} parameter for a class of sub-graph identification and listing queries in the
fined-grained complexity setting. Very recently, Bringmann and
Gorbachev~\cite{DBLP:conf/stoc/BringmannG25}, showed that $O(m^{\subw(H)}\log m)$ time is
the optimal time to find a subgraph pattern $H$ in a graph with $m$ edges, given that $H$ is
a sub-quadratic pattern, based on standard fine-grained complexity hypotheses.

Other forms of data partitioning have been used to develop {\em data-sensitive}
algorithms for (cyclic) CQs~\cite{deeds_et_al:LIPIcs.ICDT.2025.17,10.1145/2967101}.
Recently, data partitioning has also been used to develop {\em output-sensitive}
algorithms for acyclic CQs~\cite{10.1145/3695838,10.1145/3725241}
that improve upon the classic Yannakakis algorithm~\cite{DBLP:conf/vldb/Yannakakis81}.

\subsection{Lower bounds}
\label{sec:related:lower-bounds}

As mentioned before, the submodular width runtime that is achieved by the $\panda$ algorithm
remains the best known runtime, as far as combinatorial algorithms can go, for any CQ $Q$
that has been studied in the literature. This raises the questions: {\em Does this follow
from some existing lower bound conjecture in fine-grained complexity, or does this need its
own conjecture?} We state here some lower bound results that might be relevant to this line
of investigation.

Marx~\cite{DBLP:journals/jacm/Marx13} showed that {\em bounded submodular width} is indeed
the right criteria that captures precisely tractable Boolean CQs (without self joins), under
the Exponential Time Hypothesis (ETH). This however does not say anything about the {\em
fine-grained complexity} of Boolean CQs with bounded submodular width: It only says that
they are (the only ones) solvable in polynomial time.

Fan, Koutris, and Zhao~\cite{fan_et_al:LIPIcs.ICALP.2023.127} relied on the hardness of
$k$-clique queries under combinatorial algorithms to prove conditional hardness of all CQs.
The resulting lower bound measure, called the {\em clique-embedding power}, is always upper
bounded by the submodular width. It coincides with submodular width for some classes of
queries but deviates for others. The authors left open the question of whether this gap is
due to the non-tightness of their lower bound or the submodular width as an upper bound.

Bringmann and Gorbachev~\cite{DBLP:conf/stoc/BringmannG25} studied graph queries (i.e.~ with
binary input relations) in the region where the submodular width is strictly below 2.
They show that in this region, the submodular width does indeed provide a lower bound for
each one of the {\em listing}, {\em enumeration}, and {\em minimum-weight} versions of the problem, modulo well-accepted conjectures in fine-grained complexity.

A new kind of unconditional lower bounds emerged on the size of the smallest {\em circuit} needed to
represent the output of a CQ. Fan, Koutris, and Zhao~\cite{10.1145/3651588} showed that the
{\em entropic version}\footnote{The entropic version of the submodular width is the one
where polymatroids are replaced with entropic functions. We discuss this further in
Section~\ref{sec:open-problems}.}
of the submodular width, does indeed provide a lower bound
on the circuit size that cannot be improved by any polynomial factor. More recently,
Berkholz and Vinall-Smeeth~\cite{2025arXiv250320438B} showed a lower bound of
$N^{\Omega(\subw^{1/4})}$ on the circuit size.

\section{Preliminaries}
\label{sec:prelims}

We fix a set of variables $\bm V$ and a domain $\dom$. Variables are denoted by capital
letters ($X$) and their values by small letters ($x$); sets of variables by boldface
capitals ($\bm X$) and tuples of values by boldface small letters ($\bm x$). We write
$\dom^{\bm X}$ for the set of all tuples over $\bm X$, and $\bm x_{\bm Y}$ for the
projection of $\bm x \in \dom^{\bm X}$ onto $\bm Y \subseteq \bm X$.

\subsection{Conjunctive Queries}
A {\em conjunctive query} (CQ) $Q$ is an expression of the form:
\begin{align}
    Q(\bm F) \quad\cd\quad \bigwedge_{R(\bm X)\in \atoms(Q)} R(\bm X),
    \label{eq:cq}
\end{align}
where $\atoms(Q)$ is the {\em set of atoms} in $Q$: Each {\em atom} $R(\bm X)$ consists of a
relation symbol $R$ and a set of variables $\bm X\subseteq\bm V$. We assume $\bm V$ to be
the set of all variables in $Q$, i.e., $\bm V = \bigcup_{R(\bm X)\in \atoms(Q)} \bm X$. The
set $\bm F\subseteq \bm V$ is the set of {\em free} variables. A  {\em database instance},
$\calD$, for $Q$ is a mapping that maps each relation symbol $R$ that occurs in an atom
$R(\bm X)\in \atoms(Q)$ to a {\em relation instance}, $R^{\calD}$, which is a finite subset
of $\dom^{\bm X}$. When $\calD$ is clear from the context, we drop the superscript and write
$R$ instead of $R^{\calD}$ to refer to the relation instance, as well as the relation
symbol. The {\em answer} to $Q$ on $\calD$, denoted by $Q(\calD)$, is the relation instance
over $\bm F$ defined as:
\begin{multline*}
    Q(\calD) \defeq \set{\bm f \in \dom^{\bm F} \mid  \exists \bm v \in \dom^{\bm V}
    \text{ where }\bm v_{\bm F} = \bm f, \text{ and } \bm v_{\bm X} \in R^\calD \text{ for all } R(\bm X)\in \atoms(Q)}.
\end{multline*}
A CQ $Q$ is called {\em Boolean} if $\bm F = \emptyset$ and {\em full} if $\bm F = \bm V$.
The {\em size} of a database instance $\calD$, denoted by $N\defeq \norm{\calD}$, is the
total number of tuples in all relations in $\calD$.
Equations~\eqref{eq:4cycle:full} and~\eqref{eq:4cycle} in the introduction are two
examples of CQs: $\cyclefull$ is a full CQ (with $\bm F = \{X,Y,Z,W\}$) and $\cycle$ is a
non-full CQ (with $\bm F = \{X,Y\}$ and existentially quantified variables $Z, W$).

\subsection{Statistics}
Given a relation $R(\bm Z)$ and two disjoint sets of variables $\bm X, \bm Y \subseteq \bm
Z$, the {\em degree of $\bm x$ w.r.t.~$\bm Y$ in $R(\bm Z)$}, for a tuple $\bm x \in
\dom^{\bm X}$, and the {\em degree of $\bm X$ w.r.t.~$\bm Y$ in $R(\bm Z)$} are defined as
follows: ($\pi$ is the projection, and $\sigma$ is the selection operator)
\begin{align*}
    \deg_{R}(\bm Y|\bm X=\bm x) &\quad\defeq\quad
    |\pi_{\bm Y} \sigma_{\bm X = \bm x} R|,\\
    \deg_{R}(\bm Y|\bm X) &\quad\defeq\quad \max_{\bm x \in \dom^{\bm X}} \deg_{R}(\bm Y|\bm X=\bm x).
\end{align*}
A {\em degree constraint} is an expression of the form $\deg_{R}(\bm Y|\bm X) \leq
N_{\bm Y|\bm X}$, where $N_{\bm Y|\bm X}$ is an upper bound on the degree, and $R(\bm Z)$ is
called the {\em guard} of the constraint. Note that $|R(\bm Z)| = \deg_{R}(\bm
Z|\emptyset)$, so a cardinality constraint is a special case of a degree constraint.
Similarly, a functional dependency (FD) $\bm X \to \bm Y$ in $R(\bm Z)$ corresponds to the
degree constraint $\deg_{R}(\bm Y|\bm X) \leq 1$. A {\em set of statistics} $\calS$
is a set of degree constraints; we write $N_{\bm Y|\bm X} \in \calS$ to mean that $\calS$
contains the constraint $\deg_{R}(\bm Y|\bm X) \leq N_{\bm Y|\bm X}$. A database
instance $\calD$ {\em satisfies} $\calS$, written $\calD \models \calS$, if it satisfies all
constraints in $\calS$. We abbreviate $N_{\bm Y|\emptyset}$ as $N_{\bm Y}$. A set of
statistics $\calS$ is called a set of {\em identical cardinality constraints} if it contains
only constraints of the form $N_{\bm Y} = N$, where $N = \norm{\calD}$. Later in
Section~\ref{sec:extensions:lpnorm},
we consider a more general class of statistics.

\subsection{Entropy, polymatroids, and Shannon inequalities}
Given a probability distribution $p$ over $\dom^{\bm V}$, the {\em entropy} of a given set
of variables $\bm X \subseteq \bm V$ is defined as:
\begin{align*}
    h(\bm X) \quad\defeq\quad -\sum_{\bm x \in \dom^{\bm X}} p_{\bm X}(\bm x) \log p_{\bm X}(\bm x)
\end{align*}
where $p_{\bm X}$ is the {\em marginal distribution} of $p$ onto $\bm X$. A function $h:
2^{\bm V} \to \R_+$ is called {\em entropic} if there exists a probability distribution $p$
over $\dom^{\bm V}$ satisfying the above for all $\bm X\subseteq \bm V$. A function $h:
2^{\bm V} \to \R_+$ is called a {\em polymatroid} if it satisfies the following
inequalities, which are known as the {\em basic Shannon inequalities}:
\begin{align}
    h(\emptyset) &= 0 \label{eq:h:emptyset}\\
    h(\bm X) &\leq h(\bm X \cup\bm Y),  &\text{for all } \bm X, \bm Y \subseteq \bm V\label{eq:h:monotonicity}\\
    h(\bm X) + h(\bm Y) &\geq h(\bm X \cup \bm Y) + h(\bm X \cap \bm Y), &\text{for all } \bm X, \bm Y \subseteq \bm V\label{eq:h:submodularity}
\end{align}
Eq.~\eqref{eq:h:monotonicity} is called {\em monotonicity}, and Eq.~\eqref{eq:h:submodularity} is called {\em submodularity}.
Let $\Gamma_n$ denote the set of all polymatroids over $n$ variables, and
$\Gamma^*_n$ denote the set of all entropic functions over $n$ variables.

A {\em Shannon inequality} is any linear inequality that can be derived from the basic
Shannon inequalities, i.e.~ by taking a non-negative linear combination of them. Every
entropic function must satisfy all Shannon inequalities, hence is a polymatroid. However,
not every polymatroid is entropic since entropic functions additionally must satisfy {\em
non-Shannon} inequalities~\cite{DBLP:journals/tit/ZhangY97,zhang1998characterization}. Given
two sets of variables $\bm X, \bm Y$, we abbreviate $\bm X \cup \bm Y$ as $\bm X\bm Y$.
Given a set function $h: 2^{\bm V} \to \R_+$ and two sets of variables $\bm X, \bm Y
\subseteq \bm V$, we define $h(\bm Y|\bm X)\defeq h(\bm X\bm Y) -h(\bm X)$. If $h$ is
entropic then $h(\bm Y|\bm X)$ is the {\em conditional entropy} of $\bm Y$ given $\bm X$. We
call $h(\bm Y|\bm X)$ {\em unconditional} iff $\bm X =\emptyset$, in which case we write
$h(\bm Y)$ instead of $h(\bm Y|\emptyset)$, for brevity. Note that
Inequality~\eqref{eq:h:monotonicity} can be written as $h(\bm Y|\bm X)\geq 0$, whereas
Inequality~\eqref{eq:h:submodularity} can be written as $h(\bm Y|\bm X\cap \bm Y) \geq h(\bm
Y|\bm X)$. By choosing $\bm A\defeq \bm X \cap \bm Y$,\; $\bm B \defeq \bm Y \setminus \bm
X$, and $\bm C \defeq \bm X \setminus \bm Y$, the following is an equivalent form of
Inequality~\eqref{eq:h:submodularity}:
\begin{align}
    h(\bm B|\bm A) \quad\geq\quad h(\bm B|\bm A \bm C), \quad &\text{for all $\bm A, \bm B, \bm C \subseteq \bm V$}
    \label{eq:h:submodularity:equiv}
\end{align}
If $h$ is entropic, then the above intuitively says that the conditional entropy $h(\bm
B|\bm A)$ cannot increase when we condition on {\em extra} variables $\bm C$. In other
words, since entropy measures the uncertainty of a set of variables, knowing more cannot
increase the uncertainty. Let $\calS$ be a set of statistics for a database of size $N$. A
set function $h:2^{\bm V}\to \R_+$ is said to {\em satisfy} $\calS$, denoted $h \models
\calS$, if it satisfies:
\begin{align}
    h(\bm Y|\bm X) \quad\leq\quad \log_N N_{\bm Y|\bm X},& \quad\quad \forall N_{\bm Y|\bm X} \in \calS
    \label{eq:h:satisfies:dc}
\end{align}
Note that in the above, we take the {\em log base-$N$} where $N$ is the input database size.
We use $h \models \calS, \Gamma_n$ to denote that $h$ is a polymatroid that satisfies $\calS$.

\subsection{Acyclic CQs and Tree Decompositions}
A CQ $Q$ (Eq.~\eqref{eq:cq}) is called {\em acyclic}
if we can construct a tree whose nodes are the sets $\bm X$ for $R(\bm X)\in \atoms(Q)$
such that each variable in $\bm V$ appears in a connected subtree.
For example, the following two queries are acyclic:
\begin{align}
    Q_\square^{\calT_1}(X, Y) &\quad\cd\quad A_{11}(X, Y, Z) \wedge A_{12}(Z, W, X)\\
    Q_\square^{\calT_2}(X, Y) &\quad\cd\quad A_{21}(Y, Z, W) \wedge A_{22}(W, X, Y)
\end{align}
An acyclic CQ $Q$ is called {\em free-connex} if
it remains acyclic after adding an extra atom over the free variables $\bm F$.
The above two queries $Q_\square^{\calT_1}$ and $Q_\square^{\calT_2}$ are both free-connex.
Any free-connex CQ $Q$ is known to be evaluatable is time $O(N + \OUT)$ using the Yannakakis
algorithm~\cite{DBLP:conf/vldb/Yannakakis81,DBLP:conf/csl/BaganDG07}, where $\OUT$ is the size of the output $Q(\calD)$.
A {\em tree decomposition (TD)} is a concept used to decompose any CQ (not necessarily acyclic)
into ``subqueries'' whose outputs form an acyclic query.
For our purposes, a TD $\calT$ of a CQ $Q$
is specified by a set, $\bags(\calT)\subseteq 2^{\bm V}$, of variable sets, called the {\em bags}
of the TD, that satisfy the following:
\begin{enumerate}
    \item The bags form an acyclic query $Q(\bm F) \cd$ $\bigwedge_{\bm B \in\bags(\calT)} Q_{\bm B}(\bm B)$, where $Q_{\bm B}$ is a new relation symbol associated with $\bm B$.
    \item Each atom $R(\bm X)\in\atoms(Q)$ is contained in some bag, i.e., satisfies $\bm X \subseteq \bm B$ for some $\bm B \in \bags(\calT)$.
\end{enumerate}
The TD $\calT$ is called {\em free-connex} if
the acyclic query $Q(\bm F) \cd$ $\bigwedge_{\bm B \in\bags(\calT)} Q_{\bm B}(\bm B)$ is free-connex.
If the query is Boolean or full, then all its TDs are free-connex.
For example, consider the 4-cycle query $\cycle$ from Eq.~\eqref{eq:4cycle}:
This query is depicted in Figure~\ref{fig:4cycle} along with two TDs
$\calT_1$ and $\calT_2$
with $\bags(\calT_1) \defeq\{\{X, Y, Z\}, \{Z, W, X\}\}$
and $\bags(\calT_2) \defeq\{\{Y, Z, W\}, \{W, X, Y\}\}$.
These two TDs can be used to transform $\cycle$
into the two acyclic queries $Q_\square^{\calT_1}$ and $Q_\square^{\calT_2}$ defined above.
These two are the only non-trivial TDs of $Q_\square$, besides the trivial one
that puts all variables in one bag.
Both TDs are free-connex.
Given a CQ $Q$, we use $\td(Q)$ to denote the set of all free-connex TDs of $Q$.

\begin{figure}[t]
    \centering
    \begin{tikzpicture}[scale = .4, every node/.style={scale=0.9}]
        \node[] at (0,0) (X) {$X$};
        \node[] at (8,0) (Z) {$Z$};
        \node[] at (4,2) (Y) {$Y$};
        \node[] at (4,-2) (W) {$W$};
        \draw[rotate around={+25:(2, +1)}, ] (2, +1) ellipse (3.2 and .75)
        node[shift={(-.45cm,+.45cm)},blue]{$R$};
        \draw[rotate around={-25:(6, +1)}, ] (6, +1) ellipse (3.2 and .75)
        node[shift={(+.45cm,+.45cm)},blue]{$S$};
        \draw[rotate around={+25:(6, -1)}, ] (6, -1) ellipse (3.2 and .75)
        node[shift={(+.45cm,-.45cm)},blue]{$T$};
        \draw[rotate around={-25:(2, -1)}, ] (2, -1) ellipse (3.2 and .75)
        node[shift={(-.45cm,-.45cm)},blue]{$U$};

        \node[draw,ellipse, right = 3 of Y] (xyz) {$X, Y, Z$};
        \node[draw,ellipse, below = 0.7 of xyz] (zwx) {$Z, W, X$};
        \draw (xyz) -- (zwx);
        \node[below = .1 of zwx] {\color{blue}$\calT_1$};

        \coordinate (TD1) at ($(xyz)!0.5!(zwx)$);
        \node[draw,ellipse, right = 2 of TD1] (wxy) {$W, X, Y$};
        \node[draw,ellipse, right = 0.7 of wxy] (yzw) {$Y, Z, W$};
        \draw (wxy) -- (yzw);
        \coordinate (TD2) at ($(wxy)!0.5!(yzw)$);
        \node[below = 1.1 of TD2] {\color{blue}$\calT_2$};
    \end{tikzpicture}
\caption{Query $Q_\square$ from Eq.~\eqref{eq:4cycle} (or $\cyclefull$ from Eq.~\eqref{eq:4cycle:full}), along with the two free-connex tree decompositions.}
\label{fig:4cycle}
\end{figure}

\section{Cost Estimation of Static Query Plans}
\label{sec:static:query:plans}

\subsection{Static Query Plans}
\label{subsec:static:query:plans}

A {\em static} query plan evaluates a CQ $Q$ using a single tree decomposition (TD),
subsuming the classical notion of a join-project plan based on a join tree.
The cost of a plan is the {\em worst-case cardinality of the largest intermediate relation}
produced during execution; estimating this cost reduces to bounding the worst-case output
size of a CQ (see Section~\ref{subsec:output-size-bound}).
Optimizing over all TDs leads naturally to the {\em fractional hypertree width}
of $Q$ under $\calS$~\cite{DBLP:journals/talg/GroheM14}, which we generalize here
to arbitrary statistics and non-Boolean CQs.

Consider the conjunctive query~\eqref{eq:cq}, over a database instance $\calD$
satisfying statistics $\calS$. Given a tree decomposition $\calT \in
\td(Q)$, the query plan that $\calT$ represents can be expressed via the following two
rules:
\begin{align}
    \bigwedge_{\bm B\in\bags(\calT)} Q_{\bm B}(\bm B)
        &\quad\cd\quad \bigwedge_{R(\bm X)\in\atoms(Q)} R(\bm X) \label{eq:static:plans:rule1}\\
    Q(\bm F) &\quad\cd\quad \bigwedge_{\bm B\in\bags(\calT)} Q_{\bm B}(\bm B) \label{eq:static:plans:rule2}
\end{align}
Rule~\eqref{eq:static:plans:rule1} is somewhat unconventional, with a conjunction in the head;
we formulate it this way for reasons that will be clear later. Its semantics is as expected:
relations $Q_{\bm B}$ are answers to the query if, for every tuple $\bm t$ satisfying its
body, the conjunction $\bigwedge_{\bm B\in\bags(\calT)}$ $Q_{\bm B}(\bm t_{\bm B})$ holds. In
particular, it asks us to compute one intermediate relation $Q_{\bm B}$ for each bag $\bm B
\in \bags(\calT)$. We will assume that the $Q_{\bm B}$ are minimal answers; then, we compute
$Q(\bm F)$ using rule~\eqref{eq:static:plans:rule2}. Since $\calT$ is free-connex,
rule~\eqref{eq:static:plans:rule2} defines an acyclic free-connex query over the intermediate
relations $\{Q_{\bm B}\}_{\bm B\in\bags(\calT)}$, and can therefore be evaluated in time
$O(\max_{\bm B\in\bags(\calT)} |Q_{\bm B}| + |Q(\bm F)|)$ using the Yannakakis
algorithm~\cite{DBLP:conf/vldb/Yannakakis81,DBLP:conf/csl/BaganDG07}.

Equivalently, rule~\eqref{eq:static:plans:rule1} can be evaluated by computing
the intermediate relations $Q_{\bm B}$ using the following CQs,
one for each bag $\bm B \in \bags(\calT)$:
\begin{align}
    Q_{\bm B}(\bm B) &\quad\cd\quad \bigwedge_{R(\bm X)\in\atoms(Q)} R(\bm X),
    && \bm B \in \bags(\calT)
    \label{eq:single-td:rule1:individual}
\end{align}

Ideally, we would like the cost of the query plan $\calT$ to be the worst-case cardinality of the
largest intermediate relation, over all database instances satisfying $\calS$:
\begin{align}
    \sup_{\calD \models \calS}
    \; \max_{\bm B \in \bags(\calT)}\; |Q_{\bm B}(\calD)|
    \quad=\quad \max_{\bm B \in \bags(\calT)}\; \sup_{\calD \models \calS}\;  |Q_{\bm B}(\calD)|
    \label{eq:cost:single-td}
\end{align}
where $Q_{\bm B}$ is the output of rule~\eqref{eq:single-td:rule1:individual}. Consequently,
the first thing we need to do is to estimate $\sup_{\calD \models \calS}  |Q_{\bm
B}(\calD)|$ for each bag $\bm B$. This is the problem of bounding the worst-case output size
of a CQ $Q$ under statistics $\calS$, which we discuss next.

\subsection{Output Cardinality Bound of a Conjunctive Query}
\label{subsec:output-size-bound}

Given a CQ $Q$ of the form~\eqref{eq:cq} and statistics $\calS$, this section develops
bounds on the worst-case output cardinality of $Q$:
\begin{align}
    \sup_{\calD \models \calS} |Q(\calD)|
    \label{eq:worst-case-output-size}
\end{align}
The bounds are based on a long line of research using entropic arguments
in extremal combinatorics~\cite{MR599482, MR859293, MR1639767, friedgut-kahn-1998,
DBLP:journals/siamcomp/AtseriasGM13, DBLP:journals/jacm/GottlobLVV12,
DBLP:conf/pods/KhamisNS16, DBLP:conf/pods/Khamis0S17, theoretics:13722, pandaexpress}.
We introduce the {\em entropic bound} and its tractable relaxation, the {\em polymatroid
bound}, as the main tools.
To demonstrate the idea, consider the 4-cycle query $\cyclefull$ from Eq.~\eqref{eq:4cycle:full}.
Suppose we were given the following {\em statistics} about the input database instance:
\begin{itemize}[leftmargin=*]
    \item The size of each relation is at most $N$ for some number $N$.
    \item The relation $U(W, X)$ has a functional dependency (FD) $W\to X$, i.e.~$\deg_U(X|W) \leq 1$.
    \item The relation $U(W, X)$ has a degree constraint $\deg_{U}(W|X) \leq C$
    for some number $C$.
\end{itemize}
In particular,
the given set of statistics $\cyclefullstats$ is as follows:
\begin{align}
    \cyclefullstats \quad\defeq\quad \{N_{XY} = N_{YZ} = N_{ZW} = N_{WX} = N,\quad N_{X|W} = 1,\quad N_{W|X} = C\}
    \label{eq:4cycle:full:stats}
\end{align}
Consider a database instance $\calD\models\cyclefullstats$. Let's say we want to estimate
$\sup_{\calD\models\cyclefullstats} |\cyclefull(\calD)|,$ which is an instance
of~\eqref{eq:worst-case-output-size} for the query $\cyclefull$ and statistics
$\cyclefullstats$. Figure~\ref{fig:4cycle:data} depicts such an instance along with the
corresponding output $\cyclefull(\calD)$.

The entropy argument goes this like: take a {\em uniform} probability distribution
over the output tuples in $\cyclefull(\calD)$, and use it to define an {\em entropic
function} $h:2^{\bm V}\to \R_+$ where $\bm V \defeq \{X, Y, Z, W\}$ in this example. By
uniformity, we have $h(XYZW) = \log |\cyclefull(\calD)|$. Moreover, since the projection of
$\cyclefull(\calD)$ onto $(X, Y)$ is contained in $R$, we have $h(XY) \leq \log |R| \leq
\log N$. Similarly, we have $h(YZ), h(ZW), h(WX) \leq \log N$. In addition, since $U(W, X)$
has a functional dependency $W\to X$, this means that the {\em conditional entropy} of $X$
given $W$ is zero, i.e., $h(X|W) = 0$. Finally, since $\deg_U(W|X) \leq C$, we have $h(W|X)
\leq \log C$. For example, if every $X$-value in $U$ has at most 2 matching $W$-values as in
Figure~\ref{fig:4cycle:data}, then we need one bit of information to represent $W$ for a
given $X$, i.e. $h(W|X) = \log 2 = 1$. Putting everything together, the entropic function
$h$ satisfies the statistics in $\cyclefullstats$ {\em if} we scale it down by a factor of
$\log N$, in order to match Eq.~\eqref{eq:h:satisfies:dc}. In particular, if we define $\ov
h \defeq \frac{1}{\log N}\cdot h$, then $\ov h\models\cyclefullstats$ is entropic\footnote{In general, $\ov h$ may not be entropic, but it is always a limit of entropic functions.}, and
satisfies $\ov h(XYZW) = \log_N |\cyclefull(\calD)|$.

\begin{figure*}
    \begin{minipage}[t]{0.18\textwidth}
        \centering
        $R\quad\quad$~\\\vspace{-0.25cm}
        \begin{tabular}[t]{|c|c|c}
            \cline{1-2}
            \rowcolor{lightgray}
            $X$ & $Y$ \\\cline{1-2}
            $1$ & $p$ & \color{red}{1/3}\\
            $1$ & $q$ & \color{red}{2/3}\\
            $2$ & $p$ & \color{red}{0}\\\cline{1-2}
        \end{tabular}
    \end{minipage}
    \begin{minipage}[t]{0.18\textwidth}
        \centering
        $S\quad\quad$~\\\vspace{-0.25cm}
        \begin{tabular}[t]{|c|c|c}
            \cline{1-2}
            \rowcolor{lightgray}
            $Y$ & $Z$ \\\cline{1-2}
            $p$ & $3$ &\color{red}{1/3}\\
            $q$ & $4$ &\color{red}{0}\\
            $q$ & $5$ &\color{red}{2/3}\\\cline{1-2}
        \end{tabular}
    \end{minipage}
    \begin{minipage}[t]{0.18\textwidth}
        \centering
        $T\quad\quad$~\\\vspace{-0.25cm}
        \begin{tabular}[t]{|c|c|c}
            \cline{1-2}
            \rowcolor{lightgray}
            $Z$ & $W$ \\\cline{1-2}
            $3$ & $i$ &\color{red}{1/3}\\
            $5$ & $i$ &\color{red}{1/3}\\
            $5$ & $j$ &\color{red}{1/3}\\\cline{1-2}
        \end{tabular}
    \end{minipage}
    \begin{minipage}[t]{0.18\textwidth}
        \centering
        $U\quad\quad$~\\\vspace{-0.25cm}
        \begin{tabular}[t]{|c|c|c}
            \cline{1-2}
            \rowcolor{lightgray}
            $W$ & $X$ \\\cline{1-2}
            $i$ & $1$ &\color{red}{2/3}\\
            $j$ & $1$ &\color{red}{1/3}\\
            $k$ & $2$ &\color{red}{0}\\\cline{1-2}
        \end{tabular}
    \end{minipage}
    \begin{minipage}[t]{0.25\textwidth}
        \centering
        $\cyclefull\quad\quad$~\\\vspace{-0.25cm}
        \begin{tabular}[t]{|c|c|c|c|c}
            \cline{1-4}
            \rowcolor{lightgray}
            $X$ & $Y$ & $Z$ & $W$ \\\cline{1-4}
            $1$ & $p$ & $3$ & $i$ &\color{red}{1/3}\\
            $1$ & $q$ & $5$ & $i$ &\color{red}{1/3}\\
            $1$ & $q$ & $5$ & $j$ &\color{red}{1/3}\\ \cline{1-4}
        \end{tabular}
    \end{minipage}
    \caption{An input database instance for the 4-cycle query $\cyclefull$ from Eq.~\eqref{eq:4cycle:full} along with the corresponding output.
    Each output tuple is annotated with its probability (the {\color{red} red} numbers to the right) under a {\em uniform} distribution.
    Similarly, each input tuple is annotated with its {\em marginal}
    probability.}
    \label{fig:4cycle:data}
\end{figure*}

Since $\ov h$ is entropic and $\ov h \models \cyclefullstats$, we can thus bound:
\begin{align}
    \log_N |\cyclefull(\calD)| \quad\leq\quad \sup_{h \models \cyclefullstats, \Gamma^*_4} h(XYZW)
\end{align}

The above discussion generalizes to {\em any} CQ $Q$ (not necessarily full) and {\em any} statistics $\calS$. For any
database instance $\calD\models\calS$, there exists an entropic function $h\models\calS$
such that $h(\bm F) = \log_N |Q(\calD)|$, where $\bm F\subseteq\bm V$ are the free variables of $Q$. Hence, the following is indeed an upper bound (on
$\log_N$-scale) on the maximum output size of $Q$ over all database instances
$\calD\models\calS$. We refer to it as the {\em entropic bound} of $Q$ under $\calS$:


\begin{thm}[\cite{DBLP:conf/pods/Khamis0S17}]
Let $Q$ be a CQ (Eq.~\eqref{eq:cq}), with given statistics $\calS$ from an input database instance. Then, we have
\begin{align}
    \log_N {\color{blue}\underbrace{\color{black}\sup_{\calD\models\calS} |Q(\calD)|}_{\substack{\text{worst-case}\\\text{output size}}}}
    \leq
    {\color{blue}\underbrace{\color{black}\sup_{h \models \calS, \Gamma^*_n} h(\bm F)}_{\text{Entropic-bound$(Q, \calS)$}}}
    \leq
    {\color{blue}\underbrace{\color{black}\max_{h \models \calS, \Gamma_n} h(\bm F)}_{\text{Polymatroid-bound$(Q, \calS)$}}}
    \label{eq:entropic:polymatroid:bound}
\end{align}
\label{thm:cq:bound}
\end{thm}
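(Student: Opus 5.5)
The proof splits into the two inequalities of~\eqref{eq:entropic:polymatroid:bound}. The second is the easy one. Every entropic function is a polymatroid, so $\Gamma^*_n\subseteq\Gamma_n$, and supremizing the same objective $h(\bm F)$ over a larger feasible set cannot decrease it. To justify writing $\max$, we observe that the feasible region $\{h\in\Gamma_n : h\models\calS\}$ is a polyhedron. It is bounded whenever the worst-case output is finite, because each variable lies in some atom whose cardinality constraint caps $h(\bm X)$, and subadditivity (a consequence of submodularity with $h(\emptyset)=0$) together with monotonicity then caps $h(\bm V)$. When the feasible region is empty or unbounded, we interpret the bound in the extended reals. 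So the maximum of the linear objective is attained.

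For the first inequality we fix an arbitrary $\calD\models\calS$ with $Q(\calD)\neq\emptyset$; the empty case is trivial. For each output tuple $\bm f\in Q(\calD)$ we choose one witness $\bm v^{(\bm f)}\in\dom^{\bm V}$ satisfying every atom, with $\bm v^{(\bm f)}_{\bm F}=\bm f$. We then let $p$ be the uniform distribution over the set $W$ of chosen witnesses, so $|W|=|Q(\calD)|$, and let $h$ be its entropy function. The projection $W\to Q(\calD)$ onto $\bm F$ is a bijection, so the marginal on $\bm F$ is uniform on $Q(\calD)$ and $h(\bm F)=\log|Q(\calD)|$. Choosing a single witness per output tuple is what makes this work for non-full queries: it avoids a non-uniform marginal on $\bm F$.

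Next we check that $\ov h=h/\log N$ satisfies each constraint $\deg_R(\bm Y|\bm X)\le N_{\bm Y|\bm X}$ with guard $R(\bm Z)$. Every witness projects into $R$, so the support of the marginal on $\bm X\bm Y$ lies in $\pi_{\bm X\bm Y}R$. Conditioned on any $\bm X=\bm x$ of positive probability, the conditional distribution of $\bm Y$ is supported on $\pi_{\bm Y}\sigma_{\bm X=\bm x}R$, which has size at most $N_{\bm Y|\bm X}$. Hence $H(\bm Y\mid\bm X=\bm x)\le\log N_{\bm Y|\bm X}$ for each such $\bm x$, and averaging over $\bm x$ gives $h(\bm Y|\bm X)\le\log N_{\bm Y|\bm X}$. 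Dividing by $\log N$ yields~\eqref{eq:h:satisfies:dc}.

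The main obstacle is that $\ov h$ is a positive rescaling of an entropic function, and $\Gamma^*_n$ itself is not closed under arbitrary scaling. I would handle this by taking the supremum over the closure $\overline{\Gamma^*_n}$, which is a convex cone, as in the footnote that treats $\ov h$ as a limit of entropic functions. Equivalently, one can apply the constraints in natural-log units before normalizing. Since the objective and constraints are continuous, passing to the closure does not change the supremum up to approximation. This gives $\log_N|Q(\calD)|=\ov h(\bm F)\le\sup_{h\models\calS,\Gamma^*_n}h(\bm F)$, and taking the supremum over $\calD$ completes the proof.
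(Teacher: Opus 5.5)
Your argument is correct and is essentially the paper's: a uniform distribution attached to the output gives an entropic $h$ with $h(\bm F)=\log|Q(\calD)|$, support containment in each guard gives $h(\bm Y|\bm X)\le\log N_{\bm Y|\bm X}$, rescaling by $1/\log N$ gives $\ov h\models\calS$, and $\Gamma^*_n\subseteq\Gamma_n$ gives the second inequality; your choice of one witness per output tuple spells out the non-full case, which the paper only asserts ``generalizes.'' Two side remarks are inaccurate but do not affect the proof: the LP maximum is attained whenever the objective is bounded above (the feasible region itself can be unbounded, and $\calS$ need not contain a cardinality constraint for every atom), and continuity alone does not show that passing to $\overline{\Gamma^*_n}$ leaves the constrained supremum unchanged, since approximating entropic points may violate tight constraints such as FDs, where $\log_N 1=0$; you do not need this step, because $\ov h$ is exactly the base-$N$ entropy of $p$, which is what your natural-log remark amounts to.
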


The entropic bound is asymptotically
tight~\cite{DBLP:journals/tit/ChanY02,DBLP:conf/pods/Khamis0S17}. However, it is not known
to be computable since characterizing the entropic cone is a major long-standing open
problem in information theory~\cite{kolaitis_et_al:DagRep.12.7.180}. To overcome the
computability issue, we relax the bound to replace the set of entropic functions with the
set of polymatroids, which is a {\em superset} of entropic functions. Recall that {\em
polymatroids} are only required to satisfy Shannon inequalities
(Eq.~\eqref{eq:h:emptyset}--\eqref{eq:h:submodularity}), whereas entropic functions must
additionally satisfy {\em non-Shannon
inequalities}~\cite{DBLP:journals/tit/ZhangY97,zhang1998characterization}. Because
polymatroids are a superset, the polymatroid bound is an upper bound on the entropic bound.
It is {\em not} tight in general~\cite{DBLP:conf/pods/Khamis0S17}. However, it is easily
computable as a linear program (LP), since Shannon inequalities as well as
Eq.~\eqref{eq:h:satisfies:dc} are all linear constraints. In particular, {\em going from the
entropic to the polymatroid bound, we trade tightness for computability}. The polymatroid
bound builds on top of prior bounds introduced in the
literature~\cite{DBLP:journals/siamcomp/AtseriasGM13,DBLP:journals/jacm/GottlobLVV12,DBLP:conf/pods/GottlobLV09}:
\begin{itemize}[leftmargin=*]
    \item If all degree constraints are cardinality constraints, then the polymatroid bound collapses
back to the {\em AGM bound}~\cite{DBLP:journals/siamcomp/AtseriasGM13}.
In this case, it also coincides with the entropic bound, thus making it tight asymptotically.
    \item If degree constraints contain only cardinality constraints and FDs, then the polymatroid bound coincides with the bound proposed by Gottlob, Lee, Valiant and Valiant, also known as the {\em GLVV bound}~\cite{DBLP:journals/jacm/GottlobLVV12,DBLP:conf/pods/GottlobLV09}.
    In this case, the polymatroid bound is already {\em not} tight~\cite{DBLP:conf/pods/Khamis0S17}.
\end{itemize}
In our example, the polymatroid bound implies the following bound on the output size of $\cyclefull$:
\begin{align}
    |\cyclefull(\calD)| \quad\leq\quad N^{3/2}\cdot\sqrt{C}, \quad\quad\quad \forall \calD\models\cyclefullstats
\end{align}
This is because any polymatroid $h$ must satisfy the following Shannon inequality, where
each term on the RHS is upper bounded by some constraint in $h \models\cyclefullstats$:
\begin{align}
    h(XYZW) \leq \frac{1}{2}\biggl[{\color{blue}\underbrace{\color{black}h(XY)}_{\leq 1}}+
    {\color{blue}\underbrace{\color{black}h(YZ)}_{\leq 1}}+{\color{blue}\underbrace{\color{black}h(ZW)}_{\leq 1}} + {\color{blue}\underbrace{\color{black}h(X|W)}_{=0}} + {\color{blue}\underbrace{\color{black}h(W|X)}_{\leq \log_N C}}\biggr]
\end{align}
We leave it as an exercise to verify that the above inequality is indeed a Shannon
inequality, i.e.~follows from Eq.~\eqref{eq:h:monotonicity}--\eqref{eq:h:submodularity}. We
will show later how to solve (more general forms of) the LP corresponding to the polymatroid
bound, and how to come up with these Shannon inequalities.

\subsection{Costing Static Query Plans}
\label{subsec:costing:single-td}

Getting back to the problem of estimating the cost of a static query plan $\calT$ under
statistics $\calS$: we do not know how to compute the quantity in~\eqref{eq:cost:single-td}
efficiently, and thus we use the polymatroid bound~\eqref{eq:entropic:polymatroid:bound} to
estimate it instead, and define the cost $\cost(\calT, \calS)$ of the query plan $\calT$
under statistics $\calS$ as (in $\log_N$ scale):
\begin{align}
    \cost(\calT, \calS)
    \quad=\quad
    \max_{\bm B \in \bags(\calT)}\; \max_{h \models \calS, \Gamma_n} h(\bm B)
    \label{eq:cost:single-td:polymatroid}
\end{align}
Note that the $\sup$ was replaced by $\max$ because the optimization problem is
now a linear program (LP) with a compact representation.

Since we want to find the best query plan, we seek to minimize $\cost(\calT,
\calS)$ defined in~\eqref{eq:cost:single-td:polymatroid} over all TDs $\calT \in \td(Q)$,
giving rise to the following measure of complexity, which we call the {\em fractional
hypertree width} of $Q$ under $\calS$:
\begin{align}
    \fhtw(Q, \calS) \defeq
    {\color{blue}\underbrace{\color{black}\min_{\calT \in \td(Q)}}_{\substack{\text{best single-TD}\\\text{query plan}}}} \quad
    {\color{blue}\underbrace{\color{black}\max_{\bm B \in \bags(\calT)}}_{\substack{\text{worst subquery}\\\text{in the plan}}}} \quad
    {\color{blue}\underbrace{\color{black}\max_{h \models \calS, \Gamma_n} h(\bm B)}_{\substack{\text{Polymatroid bound}\\\text{of the subquery}}}}
    \label{eq:fhtw}
\end{align}
While not immediately obvious, this definition generalizes the classical {\em fractional
hypertree width} of Grohe and Marx~\cite{DBLP:journals/talg/GroheM14} in two ways:
(1) it applies to arbitrary statistics $\calS$, whereas the original definition only
considers identical cardinality constraints; and
(2) it applies to arbitrary CQs $Q$, whereas the original definition is restricted to
Boolean CQs.
We will show later that $\panda$ can evaluate any CQ $Q$ in time
$O(N^{\fhtw(Q, \calS)} \cdot \log N + \OUT)$.

For example, consider the query $\cycle$ from Eq.~\eqref{eq:4cycle} and the tree decomposition $\calT_1$
from Figure~\ref{fig:4cycle}, with bags $\{X,Y,Z\}$ and $\{Z,W,X\}$.
Suppose that the given statistics $\calS_\square$ only contain identical cardinality constraints:
\begin{align}
    \cyclestats \quad\defeq\quad \{N_{XY} = N_{YZ} = N_{ZW} = N_{WX} = N\}
    \label{eq:4cycle:stats}
\end{align}
The query plan induced by $\calT_1$ is realized via the following three rules:
\begin{align}
    A(X, Y, Z) &\quad\cd\quad R(X, Y) \wedge S(Y, Z) \wedge T(Z, W) \wedge U(W, X) \label{eq:4cycle:plan:A}\\
    B(Z, W, X) &\quad\cd\quad R(X, Y) \wedge S(Y, Z) \wedge T(Z, W) \wedge U(W, X) \label{eq:4cycle:plan:B}\\
    \cycle(X, Y) &\quad\cd\quad A(X, Y, Z) \wedge B(Z, W, X) \label{eq:4cycle:plan:Q}
\end{align}
Rules~\eqref{eq:4cycle:plan:A} and~\eqref{eq:4cycle:plan:B} are instances of
rule~\eqref{eq:single-td:rule1:individual}, computing one intermediate relation per bag of
$\calT_1$. Rule~\eqref{eq:4cycle:plan:Q} is an instance of~\eqref{eq:static:plans:rule2},
and can be evaluated in time $O(|A| + |B| + |\cycle(\calD)|)$ by the Yannakakis algorithm,
since $\calT_1$ is free-connex.
We would like the cost of this plan to be
\begin{align*}
        \sup_{\calD \models \calS_\square}\, \max(|A(\calD)|, |B(\calD)|)
        =
        \max \{
            \sup_{\calD \models \calS_\square} |A(\calD)|,
            \sup_{\calD \models \calS_\square} |B(\calD)|
        \}
\end{align*}
However, we will use the polymatroid bound to estimate it instead, giving us (in $\log_N$ scale):
\begin{align}
    \cost(\calT_1, \calS_\square)
    &\defeq \max\!\left(\max_{h \models \calS_\square, \Gamma_4} h(XYZ),\; \max_{h \models \calS_\square, \Gamma_4} h(ZWX)\right) \label{eq:cost:4cycle:plan1:bound}
\end{align}

It is not hard to see that the polymatroid bound for each bag is:
\begin{align*}
    \max_{h \models \cyclestats, \Gamma_4} h(XYZ) = 2,
    \qquad
    \max_{h \models \cyclestats, \Gamma_4} h(ZWX) = 2
\end{align*}
so $\cost(\calT_1, \cyclestats) = 2$.
(Note that we are using statistics $\cyclestats$ from Eq.~\eqref{eq:4cycle:stats} instead of $\cyclefullstats$ from Eq.~\eqref{eq:4cycle:full:stats}.)
By symmetry, the same holds for $\calT_2$ with bags $\{Y,Z,W\}$ and $\{W,X,Y\}$, giving
$\cost(\calT_2, \cyclestats) = 2$ as well.
Since $\calT_1$ and $\calT_2$ are the only non-trivial free-connex TDs of $\cycle$
(Figure~\ref{fig:4cycle}), minimizing over all TDs gives $\fhtw(\cycle, \cyclestats) = 2$.
%

\section{Cost Estimation of Adaptive Query Plans}
\label{sec:adaptive:query:plans}

\subsection{Adaptive Query Plans}
\label{subsec:adaptive:query:plans}

An {\em adaptive} query plan evaluates a CQ $Q$ using multiple tree decompositions, where the
input data {\em and} intermediate relations are partitioned across the TDs for load
balancing. This notion of query plans is highly novel, and provides a principled and general
formalization of data partitioning strategies. Costing these plans leads to bounding the
worst-case output size of {\em disjunctive datalog rules} (DDRs), and the cost function leads
naturally to the notion of {\em submodular width} of $Q$ under $\calS$, which we generalize
from the original definition by Marx~\cite{DBLP:journals/jacm/Marx13} to arbitrary
statistics and non-Boolean CQs.

Our motivation for adaptive query plans is as follows. For some queries, using a single TD
might not be sufficient to achieve the best possible runtime. For example, consider $\cycle$
from Eq.~\eqref{eq:4cycle} and $\cyclestats$ from Eq.~\eqref{eq:4cycle:stats}. From the previous section, $\fhtw(Q_\square, \cyclestats) = 2$,
which indicates that there are database instances $\calD\models\cyclestats$ where no matter
which TD we pick (Figure~\ref{fig:4cycle}), there will always be a bag whose size is
$\Omega(N^2)$. Indeed, here is one such instance $\calD\models\cyclestats$, assuming $N$ is
even:~\footnote{We use $[N]$ to denote the set $\{1, 2, \ldots, N\}$.}
\begin{align*}
    R = S = T = U = ([N/2] \times [1]) \cup ([1] \times [N/2])
\end{align*}
%
%
Nevertheless, $Q_\square$ admits a faster $O(N^{3/2})$ algorithm that partitions the input
data and uses a different TD for each part~\cite{DBLP:journals/algorithmica/AlonYZ97}.
Partitioning in this case is done based on some degrees being smaller or larger than some
threshold, e.g., $\deg_{R}(Y|X =x) \leq \sqrt{N}$. We refer to query plans that partition
the data and use different TDs as {\em adaptive} query plans. Partitioning is not limited to
input relations only: We can partition any intermediate relation that is computed along the
way. It is also not limited to comparing degrees to fixed thresholds: We could use more
advanced partitioning strategies.

More generally, to answer the CQ~\eqref{eq:cq}, instead of committing to a single TD as in
rule~\eqref{eq:static:plans:rule1}, an adaptive query plan can be expressed via the following
two rules:
\begin{align}
    \bigvee_{\calT \in \td(Q)} \bigwedge_{\bm B\in\bags(\calT)} Q_{\bm B}(\bm B)
        &\cd \bigwedge_{R(\bm X)\in\atoms(Q)} R(\bm X) \label{eq:multi-td:rule1}\\
    Q(\bm F) &\cd \bigvee_{\calT \in \td(Q)} \bigwedge_{\bm B\in\bags(\calT)} Q_{\bm B}(\bm B) \label{eq:multi-td:rule2}
\end{align}
Rule~\eqref{eq:multi-td:rule1} is a {\em disjunctive} rule, which is even more
unconventional than rule~\eqref{eq:static:plans:rule1}. Its semantics is the natural one:
the relations $Q_{\bm B}$ are feasible solution to the query if, for every tuple $\bm t$
satisfying the body, there exists at least one TD $\calT \in \td(Q)$ for which the
conjunction $\bigwedge_{\bm B \in \bags(\calT)} Q_{\bm B}(\bm t_{\bm B})$ holds. We will
discuss how to evaluate rule~\eqref{eq:multi-td:rule1} later; for now, note that once the
relations $Q_{\bm B}$ are computed, the answer to $Q$ can be obtained by evaluating
rule~\eqref{eq:multi-td:rule2}, which is a union of CQs. Since all TDs
$\calT \in \td(Q)$ are free-connex, rule~\eqref{eq:multi-td:rule2} can be evaluated in time
$O(\max_{\bm B} |Q_{\bm B}| + |Q(\bm F)|)$.
Thus, the dominant cost is still the size of the largest intermediate relation $Q_{\bm B}$,
which is the output of rule~\eqref{eq:multi-td:rule1}.

As an illustration, consider the query $\cycle$ from Eq.~\eqref{eq:4cycle}. Its adaptive
query plan, built on top of the two TDs $\calT_1$ and $\calT_2$ shown in
Figure~\ref{fig:4cycle}, is represented by the following two rules:
\begin{multline}
    \bigl(A_{11}(X,Y,Z) \wedge A_{12}(Z,W,X)\bigr) \vee \bigl(A_{21}(Y,Z,W) \wedge A_{22}(W,X,Y)\bigr) \\
    \cd\quad R(X,Y) \wedge S(Y,Z) \wedge T(Z,W) \wedge U(W,X) \label{eq:4cycle:adaptive:rule1}
\end{multline}
\begin{align}
    \cycle(X,Y) \quad\cd\quad &\bigl(A_{11}(X,Y,Z) \wedge A_{12}(Z,W,X)\bigr) \vee \bigl(A_{21}(Y,Z,W) \wedge A_{22}(W,X,Y)\bigr) \label{eq:4cycle:adaptive:rule2}
\end{align}
where $A_{11}(X,Y,Z)$ and $A_{12}(Z,W,X)$ are the intermediate relations for the bags of
$\calT_1$, and $A_{21}(Y,Z,W)$ and $A_{22}(W,X,Y)$ are those for the bags of $\calT_2$.
As we shall see later, rule~\eqref{eq:4cycle:adaptive:rule1} admits an answer of size
$O(N^{3/2})$, matching the best known algorithm for this query~\cite{DBLP:journals/algorithmica/AlonYZ97}.

Next, we discuss how to evaluate the disjunctive rule~\eqref{eq:multi-td:rule1}, and how to
cost it. To this end, we need the concept of {\em bag selectors}, where a {\em bag selector
$\calB$} is a tuple of subsets of $\bm V$, consisting of exactly one bag from each TD in
$\td(Q)$. We use $\bs(Q)$ to denote the set of all bag selectors of $Q$. Using this notion,
and the fact that $\vee$ distributes over $\wedge$, we can rewrite the head of the
rule~\eqref{eq:multi-td:rule1} as follows:
\begin{align}
    \bigvee_{\calT \in \td(Q)} \bigwedge_{\bm B\in\bags(\calT)} Q_{\bm B}(\bm B)
    \equiv
    \bigwedge_{\calB \in \bs(Q)}
    \bigvee_{\bm B \in \calB}
    Q_{\bm B}(\bm B)
\end{align}
And thus, the rule can be reformulated as:
\begin{align}
    \bigwedge_{\calB \in \bs(Q)}
    \bigvee_{\bm B \in \calB}
    Q_{\bm B}(\bm B)
        &\quad\cd\quad \bigwedge_{R(\bm X)\in\atoms(Q)} R(\bm X) \label{eq:multi-td:rule1:rewritten}
\end{align}
Equivalently, rule~\eqref{eq:multi-td:rule1:rewritten} is a collection of rules of the form
\begin{align}
    \bigvee_{\bm B \in \calB}
    Q_{\bm B}(\bm B)
        &\quad\cd\quad \bigwedge_{R(\bm X)\in\atoms(Q)} R(\bm X),
    && \calB \in \bs(Q)
   \label{eq:ddr:rule}
\end{align}
one for each $\calB \in \bs(Q)$. These rules~\eqref{eq:ddr:rule} are called
{\em disjunctive datalog rules} (DDR).

For example, for the $\cycle$ query, rule~\eqref{eq:4cycle:adaptive:rule1} has two TDs, each with two
bags, so $\bs(Q_\square)$ consists of four bag selectors --- one bag chosen from each TD.
Using distributivity of $\vee$ over $\wedge$, we can rewrite the head of
rule~\eqref{eq:4cycle:adaptive:rule1} as:
\begin{align*}
    &\bigl(A_{11}(X,Y,Z) \wedge A_{12}(Z,W,X)\bigr) \vee \bigl(A_{21}(Y,Z,W) \wedge A_{22}(W,X,Y)\bigr) \\
    \equiv&
    \bigl(A_{11}(X,Y,Z) \vee A_{21}(Y,Z,W)\bigr) \wedge
    \bigl(A_{11}(X,Y,Z) \vee A_{22}(W,X,Y)\bigr) \wedge\\
    &
    \bigl(A_{12}(Z,W,X) \vee A_{21}(Y,Z,W)\bigr) \wedge
    \bigl(A_{12}(Z,W,X) \vee A_{22}(W,X,Y)\bigr)
\end{align*}
And thus, rule~\eqref{eq:4cycle:adaptive:rule1} can be reformulated into $4$
disjunctive datalog rules, one for each bag selector $\calB \in \bs(Q_\square)$:
{\small
\begin{align*}
    A_{11}(X,Y,Z) \vee A_{21}(Y,Z,W)
    &\cd R(X,Y) \wedge S(Y,Z) \wedge T(Z,W) \wedge U(W,X)\\
    A_{11}(X,Y,Z) \vee A_{22}(W,X,Y)
    &\cd R(X,Y) \wedge S(Y,Z) \wedge T(Z,W) \wedge U(W,X)\\
    A_{12}(Z,W,X) \vee A_{21}(Y,Z,W)
    &\cd R(X,Y) \wedge S(Y,Z) \wedge T(Z,W) \wedge U(W,X)\\
    A_{12}(Z,W,X) \vee A_{22}(W,X,Y)
    &\cd R(X,Y) \wedge S(Y,Z) \wedge T(Z,W) \wedge U(W,X)
\end{align*}}

Just like in the static query plan case, ideally we would like to use the worst-case output
size bounds of the above disjunctive datalog rules to cost the adaptive query plan. Unlike
in the case of a conjunctive query, where the output of the query given a database $\calD$
is unique (it's the unique minimal model), a DDR can have multiple minimal models (models
are feasible solutions to the DDR). We thus measure the output size of the
DDR~\eqref{eq:ddr:rule} by the quantity
\begin{align}
    \min_{(Q_{\bm B})_{\bm B\in\calB}}\quad \max_{\bm B \in \calB}\quad |Q_{\bm B}(\calD)|
\end{align}
where the $\min$ is taken over all models of the DDR~\eqref{eq:ddr:rule} given $\calD$.
In particular, we would like to estimate
\begin{align}
    \sup_{\calD\models\calS}\quad
        \max_{\calB \in \bs(Q)} \quad\min_{(Q_{\bm B})_{\bm B\in\calB}}\quad
            \max_{\bm B \in \calB}\quad
            |Q_{\bm B}(\calD)|
    \quad=\quad
    \max_{\calB \in \bs(Q)}\quad
    \sup_{\calD\models\calS}\quad
    \min_{(Q_{\bm B})_{\bm B\in\calB}}\quad \max_{\bm B \in \calB}\quad |Q_{\bm B}(\calD)|
    \label{eq:multi-td:bound:estimate}
\end{align}
This is the subject of the next section.

\subsection{Output Cardinality Bound of a Disjunctive Datalog Rule}
\label{subsec:ddr:bound}

The problem we want to solve in this subsection is to estimate, for a given disjunctive
datalog rule (DDR) of the form~\eqref{eq:ddr:rule} and statistics $\calS$, the following
quantity:
\begin{align}
    \sup_{\calD \models \calS}\quad
    \min_{(Q_{\bm B})_{\bm B\in\calB}}\quad
    \max_{\bm B \in \calB}\quad |Q_{\bm B}(\calD)|
    \label{eq:ddr:bound}
\end{align}
To illustrate the idea, consider the following DDR from the $\cycle$ example:
\begin{align}
    A_{11}(X,Y,Z) \vee A_{21}(Y,Z,W) \cd R(X,Y) \wedge S(Y,Z) \wedge T(Z,W) \wedge U(W,X)
    \label{eq:4cycle:ddr:example}
\end{align}
Here, we want to estimate the quantity
\begin{align}
    \sup_{\calD \models \calS}\; \min_{A_{11}, A_{21}}\; \max\bigl(|A_{11}(\calD)|,\, |A_{21}(\calD)|\bigr)
    \label{eq:4cycle:ddr:example:bound}
\end{align}
where the $\min$ is taken over all relations $A_{11}$ and $A_{21}$ that are models of
the DDR~\eqref{eq:4cycle:ddr:example} given $\calD$.

We prove an information-theoretic upper bound for this quantity by constructing a specific
solution $(\bar{A}_{11}, \bar{A}_{21})$ and a probability distribution over $(X,Y,Z,W)$, as follows.
Initialize tables $\bar{A}_{11}$, $\bar{A}_{21}$, and an auxiliary relation $\bar{A}(X,Y,Z,W)$ to be empty.
For every tuple $(x,y,z,w)$ satisfying the body $R(X,Y) \wedge S(Y,Z) \wedge T(Z,W) \wedge
U(W,X)$, do the following:
\begin{itemize}
    \item If $(x,y,z) \in \bar{A}_{11}$ or $(y,z,w) \in \bar{A}_{21}$, do nothing.
    \item Otherwise, insert $(x,y,z)$ into $\bar{A}_{11}$, $(y,z,w)$ into $\bar{A}_{21}$, and
    $(x,y,z,w)$ into $\bar{A}$.
\end{itemize}
Now take a uniform distribution over the tuples in $\bar{A}$, and let $h$ be its entropy
function.\footnote{Notice that the runtime of this algorithm is larger than the output size of the DDR because it produces the output to the {\em full} CQ $\cyclefull$; this algorithm is only meant as proof technique to construct the entropic vector $h$} By uniformity, we have:
\begin{align*}
    h(XYZ) = h(YZW) = h(XYZW) = \log_2 |\bar{A}_{11}| = \log_2 |\bar{A}_{21}| = \log_2 |\bar{A}|.
\end{align*}
Furthermore, the support of the marginal distribution on $(X,Y)$ is a subset of $R$, so
$h(XY) \leq \log_2 |R| = \log_2 N$. Similarly, $h(YZ) \leq \log_2 N$, $h(ZW) \leq \log_2
N$, and $h(WX) \leq \log_2 N$.
Thus, $\bar h := \frac{1}{\log N} h$ is a polymatroid such that $\bar h \models \calS_\square$
and $\bar h(XYZ) = \bar h(YZW) = \log_N |\bar{A}_{11}| = \log_N |\bar{A}_{21}|$.

We can therefore conclude with the following bound for the DDR~\eqref{eq:4cycle:ddr:example}:
\begin{align*}
    &\log_N \sup_{\calD \models \calS_\square}\; \min_{A_{11}, A_{21}}\; \max\bigl(|A_{11}(\calD)|,\, |A_{21}(\calD)|\bigr)\\
    &=
    \sup_{\calD \models \calS_\square}\; \min_{A_{11}, A_{21}}\; \max\bigl(\log_N |A_{11}(\calD)|,\, \log_N |A_{21}(\calD)|\bigr)\\
    &\leq
    \sup_{\calD \models \calS_\square}\; \max\bigl(\log_N |\bar{A}_{11}(\calD)|,\, \log_N |\bar{A}_{21}(\calD)|\bigr) \\
    &=
    \sup_{\calD \models \calS_\square}\; \log_N |\bar{A}_{11}(\calD)| \\
    &=
    \sup_{\calD \models \calS_\square}\; \min \bigl(\bar h(XYZ), \bar h(YZW)\bigr) \\
    &\leq
    \sup_{h \models \calS_\square, \Gamma_4}\; \min \bigl(h(XYZ), h(YZW)\bigr).
\end{align*}

The same proof technique generalizes to any DDR, yielding the following bound.

\begin{theorem}[Polymatroid bound of a DDR~\cite{DBLP:conf/pods/Khamis0S17}]
\label{thm:ddr:bound}
For any DDR of the form~\eqref{eq:ddr:rule} and statistics $\calS$,
\begin{align}
    \log_N \sup_{\calD \models \calS}\;
    \min_{(Q_{\bm B})_{\bm B\in\calB}}\;
    \max_{\bm B \in \calB}\; |Q_{\bm B}(\calD)|
    \quad\leq\quad
    \max_{h \models \calS, \Gamma_n}\;
    \min_{\bm B \in \calB}\; h(\bm B).
    \label{eq:ddr:polymatroid:bound}
\end{align}
\end{theorem}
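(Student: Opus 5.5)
We generalize the greedy construction used above for the DDR~\eqref{eq:4cycle:ddr:example}. Fix a bag selector $\calB$ and a database $\calD\models\calS$. It suffices to exhibit one model $(\bar Q_{\bm B})_{\bm B\in\calB}$ of the DDR~\eqref{eq:ddr:rule} and a polymatroid $\bar h\models\calS$ with
\[
\max_{\bm B\in\calB}\log_N|\bar Q_{\bm B}| \;\leq\; \min_{\bm B\in\calB}\bar h(\bm B).
\]
The left side of~\eqref{eq:ddr:polymatroid:bound} for $\calD$ is then at most the right side. Taking the supremum over $\calD$ finishes the proof. The maximum on the right of~\eqref{eq:ddr:polymatroid:bound} is attained because it is an LP over a compact polyhedron, assuming the constraints bound $h(\bm V)$; otherwise it is $+\infty$ and the claim is trivial.

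\textbf{Construction.} Start with all $\bar Q_{\bm B}$ empty and an auxiliary relation $\bar A(\bm V)$ empty. Iterate over the tuples $\bm t\in\dom^{\bm V}$ satisfying the body $\bigwedge_{R(\bm X)}R(\bm X)$, which is a finite set.
\begin{itemize}
\item If $\bm t_{\bm B}\in\bar Q_{\bm B}$ for some $\bm B\in\calB$, skip $\bm t$.
\item Otherwise insert $\bm t_{\bm B}$ into every $\bar Q_{\bm B}$, and insert $\bm t$ into $\bar A$.
\end{itemize}
By construction every body tuple satisfies some disjunct of the head, so $(\bar Q_{\bm B})$ is a model.

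The key combinatorial claim is that for each $\bm B\in\calB$ the projection $\bm t\mapsto\bm t_{\bm B}$ is a bijection from $\bar A$ onto $\bar Q_{\bm B}$. Surjectivity holds because tuples enter $\bar Q_{\bm B}$ only together with a tuple of $\bar A$. For injectivity, suppose two distinct $\bm t,\bm t'\in\bar A$ had $\bm t_{\bm B}=\bm t'_{\bm B}$, with $\bm t$ inserted first. Then when $\bm t'$ was processed, $\bm t'_{\bm B}$ was already in $\bar Q_{\bm B}$, so $\bm t'$ would have been skipped. Hence $|\bar Q_{\bm B}|=|\bar A|$ for every $\bm B\in\calB$.

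\textbf{Entropy step.} If $\bar A$ is empty, every $\bar Q_{\bm B}$ is empty and the bound is trivial; assume $\bar A\neq\emptyset$. Let $h$ be the entropy of the uniform distribution on $\bar A$. The bijection gives $h(\bm B)=\log|\bar A|=\log|\bar Q_{\bm B}|$ for every $\bm B\in\calB$.

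For each constraint $N_{\bm Y|\bm X}\in\calS$ with guard $R(\bm Z)$, every tuple of $\bar A$ satisfies the body, so $\pi_{\bm X\bm Y}\bar A\subseteq\pi_{\bm X\bm Y}R$. Thus for each value $\bm x$ the conditional distribution of $\bm Y$ given $\bm X=\bm x$ is supported on at most $\deg_R(\bm Y|\bm X=\bm x)\le N_{\bm Y|\bm X}$ values. Since entropy is at most the log of the support size, averaging over $\bm x$ gives $h(\bm Y|\bm X)\le\log N_{\bm Y|\bm X}$.

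Set $\bar h=h/\log N$. Entropic functions satisfy the basic Shannon inequalities, so $\bar h$ is a polymatroid, $\bar h\models\calS$, and $\bar h(\bm B)=\log_N|\bar Q_{\bm B}|$ for all $\bm B\in\calB$. This is the required inequality, which in fact holds with equality.

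\textbf{Main obstacle.} The only nontrivial point is the bijection claim. It is what forces all bags in the selector to have the same entropy, which turns the $\max_{\bm B}$ on the left into the $\min_{\bm B}$ on the right. The argument above handles it through the first-insertion rule. The remaining steps are routine: conditional degree constraints imply conditional entropy bounds, and the degenerate cases (empty $\bar A$, or $N=1$ where $\log_N$ is undefined and we assume $N\ge2$) are immediate.
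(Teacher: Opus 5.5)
Your proposal is correct and is essentially the paper's proof: the same greedy construction of $(\bar Q_{\bm B})$ with the auxiliary relation $\bar A$, the uniform distribution on $\bar A$, and the rescaled entropy $h/\log N$. You carry it out for a general bag selector and general degree constraints, and your first-insertion injectivity argument spells out what the paper compresses into ``by uniformity.'' One small aside: your remark on attainment of the maximum has the wrong dichotomy, since the right-hand side can be finite even when $h(\bm V)$ is unbounded and a bounded LP attains its optimum anyway; this is harmless, because your argument only uses the right-hand side as a supremum.
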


Note that Theorem~\ref{thm:ddr:bound} is the exact generalization of the output size bound for a CQ
from Theorem~\ref{thm:cq:bound}, because a CQ is simply a DDR with $|\calB| = 1$.

\subsection{Costing Adaptive Query Plans}
\label{subsec:costing:adaptive}

Returning to the quantity~\eqref{eq:multi-td:bound:estimate} that we want to estimate,
we replace each inner term $\sup_{\calD \models \calS} \min_{(Q_{\bm B})_{\bm B\in\calB}} \max_{\bm B \in \calB}
|Q_{\bm B}(\calD)|$ by the upper bound from Theorem~\ref{thm:ddr:bound},
and define the \emph{submodular width} of $Q$ with respect to $\calS$ as:
\begin{align}
    \subw(Q, \calS)
    \quad\defeq\quad
    \max_{\calB \in \bs(Q)}\;
    \max_{h \models \calS, \Gamma_n}\;
    \min_{\bm B \in \calB}\; h(\bm B).
    \label{eq:subw:via:bs}
\end{align}
Since this definition is expressed in terms of the bag selector set $\bs(Q)$, which is not
immediately intuitive, we reformulate it.
By swapping the two $\max$ operators, it is not hard to see that:
\begin{align}
    \subw(Q, \calS)
    &=
    \max_{\calB \in \bs(Q)}\;
    \max_{h \models \calS, \Gamma_n}\;
    \min_{\bm B \in \calB}\; h(\bm B)\nonumber\\
    &=
    \max_{h \models \calS, \Gamma_n}\;
    \min_{\calT \in \td(Q)}\;
    \max_{\bm B \in \bags(\calT)}\; h(\bm B).
    \label{eq:subw}
\end{align}
The last expression can equivalently serve as the definition of the submodular width,
and it can be the more natural one to start with; however, the other expression is more
convenient to compute as we shall see in the next section.

The above definition of the submodular width {\em generalizes} the original one by
Marx~\cite{DBLP:journals/jacm/Marx13} in two ways: (1) it applies to arbitrary statistics
$\calS$ about the input database instance $\calD$, whereas the original definition only
considers the input size $N$, and (2) it applies to arbitrary CQs $Q$ thanks to the notion
of {\em free-connex} TDs, whereas the original definition only applies to Boolean CQs.
Marx~\cite{DBLP:journals/jacm/Marx13} described an algorithm that can evaluate a {\em
Boolean} CQ $Q$ under {\em identical cardinality constraints} $\calS$ in time
$O(N^{c\cdot\subw(Q, \calS)})$, where $c > 1$.

For the query $\cycle$ under $\cyclestats$ (Eq.~\eqref{eq:4cycle:stats}), the submodular width becomes:
\begin{align}
    \subw(Q_\square, \calS_\square)
        = \max_{h \models \calS_\square, \Gamma_4}
        \min(\max(h(XYZ), h(ZWX)),\max(h(YZW), h(WXY)))
    \label{eq:subw:4cycle}
\end{align}

\section{Computing the Submodular Width, Shannon-Flow Inequalities}
\label{sec:subw}

Our main goal is to introduce an algorithm (called
$\panda$~\cite{DBLP:conf/pods/Khamis0S17,theoretics:13722,pandaexpress}) that can evaluate
any CQ $Q$ in time $O(N^{\subw(Q, \calS)}\cdot \log N+\OUT)$ for any set of statistics
$\calS$. This runtime remains the best known for {\em any} CQ $Q$ over all combinatorial
algorithms. By comparing Eq.~\eqref{eq:subw} to Eq.~\eqref{eq:fhtw}, note that the min-max
inequality implies that $\subw(Q, \calS) \leq \fhtw(Q, \calS)$ for any $Q$ and $\calS$.
Hence, the same algorithm achieves also a runtime of $O(N^{\fhtw(Q, \calS)}\cdot \log N +
\OUT)$. Before we introduce the $\panda$ algorithm, we first need to understand how to
compute $\subw(Q, \calS)$ itself, and how Shannon inequalities come into play in this
computation.

\subsection{Computing the submodular width via linear programming}
\label{subsec:subw:computation}

From~\eqref{eq:subw},  we can compute $\subw(Q, \calS)$ by solving separately, for each
bag selector $\calB \in \bs(Q)$, the inner optimization problem
\[ \max_{h \models \calS, \Gamma_n}\quad \min_{\bm B \in \calB}\quad h(\bm B), \] which
is the RHS of~\eqref{eq:subw:via:bs}. Recall that, from Theorem~\ref{thm:ddr:bound},
this is an upper bound on the worst-case size of the DDR
corresponding to $\calB$. We shall show that this is just a linear program, via our running
example. In other words, instead of dealing directly with the submodular width, we can
henceforth focus on computing the worst-case size bound of an arbitrary DDR. It will turn
out that computing this bound gives us a hint on how to evaluate the DDR itself in the time
proportional to the worst-case size bound, which means we can evaluate any CQ $Q$ in time
$O(N^{\subw(Q, \calS)}\cdot \log N + \OUT)$.

For example, $\subw(Q_\square, \calS_\square)$ can be written as a max of four expressions,
one for each bag selector: {\allowdisplaybreaks[0]
\begin{align}
    \subw(Q_\square, \calS_\square)
        = \max\biggl(
        &{\color{blue}\overbrace{\color{black}\max_{h \models \calS_\square, \Gamma_4}\min(h(XYZ), h(YZW))}^{\text{Equivalent to the LP from Eq.~\eqref{eq:lp1}}}},\max_{h \models \calS_\square, \Gamma_4}\min(h(XYZ), h(WXY)),\nonumber\\
        &\max_{h \models \calS_\square, \Gamma_4}\min(h(ZWX), h(YZW)),
        \max_{h \models \calS_\square, \Gamma_4}\min(h(ZWX), h(WXY))\biggr)
    \label{eq:subw:4cycle:distributed}
\end{align}}
Consider the first expression inside the outer max above. This expression is
equivalent to the following LP, where the objective is to maximize a {\em new} variable $t$
subject to $t \leq h(XYZ)$ and $t \leq h(YZW)$. At optimality, we must have $t =
\min(h(XYZ), h(YZW))$: {\allowdisplaybreaks[0]
\begin{align}
    \label{eq:lp1}
    \text{max} \quad&t\\
    \text{s.t.}\quad& t \leq h(XYZ), \quad t \leq h(YZW),\label{eq:lp1:c1}\\
                    & h(XY), h(YZ), h(ZW), h(WX) \leq 1,\label{eq:lp1:c2}\\
                    & h \in \Gamma_4
                    \label{eq:lp1:c3}
\end{align}
}
Recall that $h \in \Gamma_4$ is shorthand for saying $h$ satisfies the linear
constraints~\eqref{eq:h:emptyset}-\eqref{eq:h:submodularity} for $\bm V=\{X, Y, Z, W\}$,
and that $\Gamma_4$ (and $\Gamma_n$ in general) is a convex polyhedron.
We will see in the next section that the above LP has an optimal objective value of $3/2$,
and the same holds for the other 3 LPs in Eq.~\eqref{eq:subw:4cycle:distributed}.
Therefore, $\subw(Q_\square, \calS_\square) = 3/2$.

\subsection{Shannon-flow inequalities}
\label{subsec:subw:shannon}

A numeric LP-solution to~\eqref{eq:lp1} does not give us much insight into the structure of
the solution. To gain more insight, we use convex duality: We introduce Lagrange multipliers
$\lambda_1, \lambda_2$ for the two constraints in Eq.~\eqref{eq:lp1:c1}, and $w_1, w_2, w_3,
w_4$ for the four constraints in Eq.~\eqref{eq:lp1:c2}. The Lagrangian dual function is:
\begin{align*}
    & \calL(\lambda_1, \lambda_2, w_1, w_2, w_3, w_4) \\
    &= \max_{t, h \in \Gamma_4}
        \bigl\{t + \lambda_1 (h(XYZ)-t) + \lambda_2 (h(YZW)-t) \bigr.\\
    & \qquad \qquad \bigl. + w_1 (1-h(XY)) + w_2 (1-h(YZ))
    + w_3 (1-h(ZW)) + w_4 (1-h(WX))\bigr\}\\
    &= w_1 + w_2 + w_3 + w_4 + (1-\lambda_1 - \lambda_2) \cdot \max_t t  \\
    &\qquad \qquad \bigl.  + \max_{h \in \Gamma_4} \bigl\{ \lambda_1 h(XYZ) + \lambda_2 h(YZW) - w_1 h(XY) - w_2 h(YZ) - w_3 h(ZW) - w_4 h(WX)\bigr\}
\end{align*}
Let $\opt$ denote the optimal objective value of the primal LP from Eq.~\eqref{eq:lp1}.
Without loss of generality, we can assume that $\opt$ is finite, otherwise the submodular
width is unbounded (when there are insufficient degree constraints).
Due to strong duality, the Lagrangian dual problem has the same objective value as the primal LP,
namely
\begin{align*}
    \opt = \min_{\lambda_1, \lambda_2, w_1, w_2, w_3, w_4 \geq 0} \calL(\lambda_1, \lambda_2, w_1, w_2, w_3, w_4)
\end{align*}
Let $(\lambda^*_1, \lambda^*_2, w^*_1, w^*_2, w^*_3, w^*_4)$ be an optimal solution to the
above dual problem; then, in order for $\opt$ to be finite the following must hold:
\begin{align}
    \lambda_1^* + \lambda_2^* \quad=\quad& 1 \label{eq:lambda:sum}\\
    \lambda_1^* h(XYZ) + \lambda_2^* h(YZW) \quad\leq\quad& w_1^* h(XY) + w_2^* h(YZ) + w_3^* h(ZW) +w_4^* h(WX), \quad\text{for all $h \in \Gamma_4$}
    \label{eqn:shannon:inequality:example}
\end{align}
Equality~\eqref{eq:lambda:sum} holds because if $\lambda_1^* + \lambda_2^* \neq 1$, then the
term $(1-\lambda_1^* - \lambda_2^*) \cdot \max_t t$ can be made arbitrarily large by
changing $t$, which is unconstrained in the primal LP. The
inequality~\eqref{eqn:shannon:inequality:example} holds because if it was violated by some
$h \in \Gamma_4$, then the term $\max_{h \in \Gamma_4} \{\cdots\}$ can be made arbitrarily
large by scaling $h$.
Since~\eqref{eqn:shannon:inequality:example} holds, we have
$\max_{h \in \Gamma_4} \{\cdots\}=0$ because it is achieved by $h=0$ (the zero function is in $\Gamma_4$).
Consequently, the Lagrangian dual problem can be reformulated as:
\begin{align}
\min & \quad w_1 + w_2 + w_3 + w_4 \label{eq:dual:lp}\\
\text{s.t.} & \quad \lambda_1 + \lambda_2 = 1 \\
& \quad \lambda_1 h(XYZ) + \lambda_2 h(YZW) \;\leq\; w_1 h(XY) + w_2 h(YZ) + w_3 h(ZW) + w_4 h(WX) \quad\text{for all $h \in \Gamma_4$}
\label{eqn:shannon:flow:inequality:example}\\
& \quad \lambda_1, \lambda_2, w_1, w_2, w_3, w_4 \geq 0
\end{align}
Inequality~\eqref{eqn:shannon:flow:inequality:example} is called a {\em Shannon-flow
inequality}, for given parameters $\bm \lambda, \bm w$. In words, we were able to
reformulate the optimization problem as finding non-negative coefficients $\bm \lambda, \bm
w$ such that $\norm{\bm\lambda}_1 = 1$ and the Shannon-flow inequality holds for all $h \in
\Gamma_4$, and such that the sum of the $w_i$'s is minimized.

For example, an optimal dual solution corresponds to the following Shannon-flow inequality,
where $\lambda_1 = \lambda_2 = w_1=w_2=w_3=1/2$ and $w_4=0$:
\begin{align}
    \frac{1}{2} h(XYZ) + \frac{1}{2} h(YZW)
        \quad\leq\quad \frac{1}{2} h(XY) + \frac{1}{2} h(YZ) + \frac{1}{2} h(ZW)
    \label{eq:shannon:example}
\end{align}
The above is a Shannon inequality because it is a sum of half of the following
submodularities (Eq.~\eqref{eq:h:submodularity}):
\begin{align}
    -h(XY) - h(YZ) + h(Y) + h(XYZ) &\leq 0& \text{(submodularity)}\label{eq:shannon:example:1}\\
    -h(Y) - h(ZW) +  h(YZW) &\leq 0&\text{(submodularity)}\label{eq:shannon:example:2}
\end{align}

Generalizing the above reasoning to an arbitrary CQ $Q$ and statistics $\calS$, we can formally show
this reformulation:

\begin{lemma}[\cite{theoretics:13722,pandaexpress}]
\label{lmm:ddr:lp}
Let $\calS$ be input statistics of the DDR~\eqref{eq:ddr:rule} over variables $\bm V$. Then,
$\max_{\bm h \models \calS} \min_{\bm B \in \calB} h(\bm B)$ has exactly the same optimal
objective value as the following optimization problem:
\begin{align}
    \min_{\bm \lambda, \bm w} \quad &
    \sum_{N_{\bm Y|\bm X} \in \calS} w_{\bm Y|\bm X} \log_N N_{\bm Y|\bm X} \\
    \text{s.t.} \quad &
    \sum_{\bm B \in \calB}\lambda_{\bm B}\cdot h(\bm B)
    \quad\leq\quad \sum_{N_{\bm Y|\bm X} \in \calS} w_{\bm Y|\bm X} \cdot h(\bm Y|\bm X)
    ,\quad\text{$\forall h \in \Gamma_n$} \label{eqn:shannon:general}\\
    & \norm{\bm \lambda}_1 = 1 \text{ and } \bm \lambda \geq \bm 0, \bm w \geq \bm 0, \nonumber
\end{align}
provided that one of them is bounded.
Moreover, both problems can be solved with a linear program.
\end{lemma}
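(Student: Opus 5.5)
The argument is LP duality, following the computation just carried out for the 4-cycle in Eqs.~\eqref{eq:lp1}--\eqref{eq:dual:lp}.

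\textbf{Step 1: write the max-min problem as an LP.} Introduce a fresh variable $t$ and consider the program that maximizes $t$ subject to:
\begin{itemize}[leftmargin=*]
  \item $t \le h(\bm B)$ for every $\bm B\in\calB$;
  \item $h(\bm Y|\bm X)\le \log_N N_{\bm Y|\bm X}$ for every $N_{\bm Y|\bm X}\in\calS$;
  \item $h\in\Gamma_n$.
\end{itemize}
At optimality $t=\min_{\bm B\in\calB}h(\bm B)$, so this LP has the same value as $\max_{h\models\calS}\min_{\bm B}h(\bm B)$. It is a finite LP: the variables are the $2^n$ values of $h$ together with $t$, and $\Gamma_n$ is cut out by the finitely many basic Shannon inequalities \eqref{eq:h:emptyset}--\eqref{eq:h:submodularity}. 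This already gives the ``solvable by a linear program'' part for the primal side.

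\textbf{Step 2: form the Lagrangian with respect to a restricted set of constraints.} Dualize only the constraints $t\le h(\bm B)$, with multipliers $\lambda_{\bm B}\ge 0$, and the degree constraints, with multipliers $w_{\bm Y|\bm X}\ge 0$. Keep $h\in\Gamma_n$ as a conic domain constraint and leave $t\in\R$ free. The resulting dual function is
\[
  \sum w_{\bm Y|\bm X}\log_N N_{\bm Y|\bm X}
  + \sup_t (1-\textstyle\sum\lambda_{\bm B})\,t
  + \sup_{h\in\Gamma_n}\Bigl(\sum\lambda_{\bm B}h(\bm B)-\sum w_{\bm Y|\bm X}h(\bm Y|\bm X)\Bigr).
\]

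\textbf{Step 3: identify where the dual function is finite.} The supremum over $t$ is finite exactly when $\norm{\bm\lambda}_1=1$. Because $\Gamma_n$ is a cone containing $0$, the supremum over $h$ is either $0$ or $+\infty$. It equals $0$ exactly when inequality~\eqref{eqn:shannon:general} holds for all $h\in\Gamma_n$. On this feasible region the dual function reduces to the stated objective $\sum w_{\bm Y|\bm X}\log_N N_{\bm Y|\bm X}$.

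\textbf{Step 4: invoke duality.} Weak duality gives primal $\le$ dual. LP strong duality gives equality whenever one side is finite: the primal is always feasible, since $h=0,\ t=0$ is feasible because $N_{\bm Y|\bm X}\ge 1$. This is the sense of the hypothesis ``provided one of them is bounded.''

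\textbf{Main obstacle: justifying strong duality cleanly, and showing the dual is itself an LP.} The constraint ``\eqref{eqn:shannon:general} holds for all $h\in\Gamma_n$'' quantifies over infinitely many $h$. I would handle it in one of two ways:
\begin{itemize}[leftmargin=*]
  \item \emph{Via Farkas' lemma.} For the polyhedral cone $\Gamma_n$, a linear inequality valid on the whole cone is exactly a nonnegative combination of the basic Shannon inequalities. Introducing multipliers for those inequalities, matched coefficient-wise, turns the dual into a finite LP.
  \item \emph{Via explicit dualization.} Write $\Gamma_n$ explicitly by its finitely many inequalities and dualize all of them. Then apply standard finite LP strong duality to the whole system, and eliminate the multipliers of the Shannon inequalities to recover the stated form.
\end{itemize}
Either route makes both the equality of the two optima and the LP-solvability of the dual rigorous.
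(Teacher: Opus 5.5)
Your proposal is correct and is essentially the paper's own argument. The paper obtains the lemma by generalizing the same partial-Lagrangian computation it carries out for the 4-cycle LP: it dualizes $t\le h(\bm B)$ and the degree constraints while keeping $h\in\Gamma_n$ inside the max, then reads off $\norm{\bm\lambda}_1=1$ and the Shannon-flow inequality from finiteness of the dual function. Your Farkas-lemma justification of strong duality for this partial dual, and of its reduction to a finite LP, makes explicit a step the paper simply attributes to ``strong duality.''
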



The Shannon-flow inequality~\eqref{eqn:shannon:general}
will play a crucial role in the design of the $\panda$ algorithm, as we will see in the next section.
The coefficients $\lambda_{\bm B}, w_{\bm Y|\bm X} \geq 0$ can be assumed to be rational numbers,
We refer to such inequalities as {\em rational} Shannon-flow inequalities.
As a special case, when the DDR is a CQ $Q$ and all statistics in $\calS$ are cardinality
constraints of the form $h(\bm X) \leq \log_N N_{\bm X}$ for each relation $R(\bm X) \in \atoms(Q)$,
the Shannon-flow inequality~\eqref{eqn:shannon:general} reduces to {\em Shearer's
lemma}~\cite{MR859293}.

The significance of Shannon-flow inequalities is that they can be used to upper-bound the
worst-case size of DDRs, and thus the cost of evaluating CQs using DDRs.
\begin{thm}[\cite{theoretics:13722,pandaexpress}]
\label{thm:ddr:bound:shannon}
Given a DDR of the form~\eqref{eq:ddr:rule}, input statistics $\calS$,
 and non-negative coefficients $\bm \lambda, \bm w$
with $\norm{\bm\lambda}_1 = 1$ forming a Shannon-flow inequality of the form~\eqref{eqn:shannon:general},
then
\begin{align}
\sup_{\calD \models \calS}\; \min_{(Q_{\bm B})_{\bm B\in\calB}}\;\max_{\bm B \in \calB}\; |Q(\bm B)|
    &\leq \prod_{N_{\bm Y|\bm X} \in \calS} N_{\bm Y|\bm X}^{w_{\bm Y|\bm X}}.
\end{align}
where the $\min$ is taken over all feasible solutions $Q_{\bm B}$ to the
DDR~\eqref{eq:ddr:rule}.
\end{thm}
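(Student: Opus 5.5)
The plan is to combine Theorem~\ref{thm:ddr:bound} with the Shannon-flow inequality, applied to an arbitrary polymatroid satisfying $\calS$. By Theorem~\ref{thm:ddr:bound}, on the $\log_N$ scale the left-hand side is at most $\max_{h \models \calS, \Gamma_n} \min_{\bm B\in\calB} h(\bm B)$. It therefore suffices to show that for every polymatroid $h \models \calS$,
\begin{align*}
    \min_{\bm B \in \calB} h(\bm B) \;\leq\; \sum_{N_{\bm Y|\bm X}\in\calS} w_{\bm Y|\bm X} \log_N N_{\bm Y|\bm X}.
\end{align*}
Exponentiating base $N$ then yields $\prod N_{\bm Y|\bm X}^{w_{\bm Y|\bm X}}$, as claimed.

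The key steps, in order, are as follows.

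\emph{Step 1 (min to average).} Because $\bm\lambda \geq \bm 0$ and $\norm{\bm\lambda}_1 = 1$, the minimum is at most any convex combination:
\begin{align*}
    \min_{\bm B} h(\bm B) \;\leq\; \sum_{\bm B\in\calB} \lambda_{\bm B} h(\bm B).
\end{align*}

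\emph{Step 2 (apply the inequality).} Inequality~\eqref{eqn:shannon:general} holds for all $h\in\Gamma_n$. It therefore bounds the convex combination by $\sum w_{\bm Y|\bm X} h(\bm Y|\bm X)$.

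\emph{Step 3 (apply the statistics).} Since $h \models \calS$, each term satisfies $h(\bm Y|\bm X)\leq \log_N N_{\bm Y|\bm X}$. Because $\bm w \geq \bm 0$, we may substitute these bounds term by term.

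Chaining the three steps and taking the maximum over $h$ gives the claim. This is exactly weak duality, i.e.\ the easy direction of Lemma~\ref{lmm:ddr:lp}.

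I expect the main obstacle to be Theorem~\ref{thm:ddr:bound} itself: constructing, for each $\calD$, a feasible solution $(\bar Q_{\bm B})$ together with an entropic $h$ satisfying $\log_N|\bar Q_{\bm B}| = h(\bm B)$ for all $\bm B$. That construction is taken as given from the paper, via the greedy insertion argument illustrated for the 4-cycle DDR.

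Some minor care is needed in three places. First, $\bar h = h/\log N$ is only a limit of entropic functions, but it is still a polymatroid, which is all Step~2 requires. Second, the bound holds uniformly in $\calD$, so the $\sup$ passes through. Third, when the body is empty the bound is trivial, since the empty relations form a feasible solution.

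A direct alternative, if we want to avoid Theorem~\ref{thm:ddr:bound}, is to run the greedy construction on $\calD$, take the uniform distribution on $\bar A$, and feed its scaled entropy directly into Steps~1--3. This gives $\log_N|\bar Q_{\bm B}| \leq \sum w \log_N N_{\bm Y|\bm X}$ for the constructed solution, and hence the same conclusion.
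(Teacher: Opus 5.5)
Your proposal is correct and matches the paper's proof essentially line for line. Like the paper, you invoke Theorem~\ref{thm:ddr:bound}, then chain $\min_{\bm B} h(\bm B) \le \sum_{\bm B}\lambda_{\bm B} h(\bm B) \le \sum w_{\bm Y|\bm X} h(\bm Y|\bm X) \le \sum w_{\bm Y|\bm X}\log_N N_{\bm Y|\bm X}$ for every $h \models \calS, \Gamma_n$, and exponentiate; your extra remarks (the almost-entropic scaling and the inline greedy alternative) are sound but add nothing the theorem does not already provide.
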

\begin{proof}
From the fact that~\eqref{eqn:shannon:general} is a Shannon-flow inequality, we can upper-bound the RHS
of~\eqref{eq:ddr:polymatroid:bound} from
Theorem~\ref{thm:ddr:bound}: for any $h \models \calS, \Gamma_n$,
\begin{align*}
    \min_{\bm B \in \calB} h(\bm B)
    \;\leq\; \sum_{\bm B \in \calB} \lambda_{\bm B}\, h(\bm B)
    \;\leq\; \sum_{N_{\bm Y|\bm X} \in \calS} w_{\bm Y|\bm X}\, h(\bm Y|\bm X)\;\leq\; \sum_{N_{\bm Y|\bm X} \in \calS} w_{\bm Y|\bm X} \log_N N_{\bm Y|\bm X}.
\end{align*}
Hence, $$\max_{h \models \calS, \Gamma_n} \min_{\bm B \in \calB} h(\bm B) \;\leq\; \sum_{N_{\bm
Y|\bm X} \in \calS} w_{\bm Y|\bm X} \log_N N_{\bm Y|\bm X},$$ and the worst-case size of the
DDR is at most $\prod_{N_{\bm Y|\bm X} \in \calS} N_{\bm Y|\bm X}^{w_{\bm Y|\bm X}}.$
\end{proof}

For example, we have shown that~\eqref{eq:shannon:example} is a Shannon inequality,
where $\lambda_1 = \lambda_2 = w_1=w_2=w_3=1/2$ and $w_4=0$. Hence, for the
DDR~\eqref{eq:4cycle:ddr:example} with statistics $\calS_\square$, the worst-case size of
the DDR is
\begin{align}
    \min_{A_{12},A_{21}}\max \{ |A_{11}(X,Y,Z)|, |A_{21}(Y,Z,W)| \} \;\leq\; \sqrt{N_{XY} N_{YZ} N_{ZW}} \;=\; N^{3/2},
    \label{eq:ddr:example:size:bound}
\end{align}
where the $\min$ is over all feasible solutions $A_{12}, A_{21}$ to the DDR~\eqref{eq:4cycle:ddr:example}.
In this example, it can also be shown that this bound is tight, i.e., equal to the $\min$ over all feasible solutions $A_{12}, A_{21}$.

\section{Proof Sequences of Shannon-Flow Inequalities}
\label{sec:proof:sequences}

This section develops tools for proving that~\eqref{eqn:shannon:general} holds for all
polymatroids; the next section shows how to turn such a proof into an algorithm.
A {\em rational} Shannon-flow inequality is a non-negative rational combination of the basic
Shannon inequalities~\eqref{eq:h:emptyset}-\eqref{eq:h:submodularity}; an {\em integral}
one uses integer coefficients. Every rational Shannon-flow inequality can be converted to an
integral one by multiplying with a common denominator. For example, Eq.~\eqref{eq:shannon:example} is rational
--- it is the sum of $1/2$ of each of the submodularities~\eqref{eq:shannon:example:1}
and~\eqref{eq:shannon:example:2} --- and multiplying by 2 gives its integral form:
\begin{align}
    h(XYZ) + h(YZW)
        \quad\leq\quad h(XY) + h(YZ) + h(ZW)
    \label{eq:shannon:example:integral}
\end{align}
Eq.~\eqref{eq:shannon:example:integral} is just the sum of the two submodularities
from Eq.~\eqref{eq:shannon:example:1} and~\eqref{eq:shannon:example:2}.
Equivalently, the following is an {\em algebraic identity} (holding for all values of $h(\cdot)$):
{\allowdisplaybreaks[0]
\begin{align}
    \label{eq:shannon:example:identity}
    {\color{blue}\overbrace{\color{black}h(XYZ) + h(YZW)}^{\text{``target terms''}}} \quad=\quad& {\color{blue}\overbrace{\color{black}h(XY) + h(YZ) + h(ZW)}^{\text{``source terms''}}}\\
        &+ {\color{blue}\underbrace{\color{black}(-h(XY) - h(YZ) + h(Y) + h(XYZ))}_{\color{blue}\text{($\leq 0$ by submodularity)}}}\nonumber\\
        &+ {\color{blue}\underbrace{\color{black}(-h(Y) - h(ZW) +  h(YZW))}_{\text{($\leq 0$ by submodularity)}}}\nonumber
\end{align}
} We call~\eqref{eq:shannon:example:identity} the {\em identity form}
of~\eqref{eq:shannon:example:integral}: the LHS is the {\em target terms} (LHS
of~\eqref{eq:shannon:example:integral}), and the RHS is the {\em source terms} (RHS
of~\eqref{eq:shannon:example:integral}) plus submodularities (and possibly monotonicities).
Every integral Shannon-flow inequality has such an identity form, and this structure will be
key to the lemmas below.

\subsection{Proof Sequence Construction}
\label{subsec:proof-sequence}

Our first lemma says that every integral Shannon-flow
inequality~\eqref{eq:shannon:example:integral} can be proved through a sequence of steps
that transform the RHS into the LHS. The proof steps are of the following kinds, for some
sets of variables $\bm X, \bm Y, \bm Z$:
\begin{align}
    \text{Decomposition Step:}\quad& h(\bm X\bm Y) \to h(\bm X) + h(\bm Y|\bm X)\label{eq:proof:decomposition}\\
    \text{Composition Step:}\quad& h(\bm X) + h(\bm Y|\bm X) \to h(\bm X \bm Y)\label{eq:proof:composition}\\
    \text{Monotonicity Step:}\quad& h(\bm X\bm Y) \to h(\bm X)\label{eq:proof:monotonicity}\\
    \text{Submodularity Step:}\quad& h(\bm Y|\bm X) \to h(\bm Y|\bm X\bm Z)\label{eq:proof:submodularity}
\end{align}
Note that by basic Shannon inequalities, each one of the above steps replaces one or two
terms with one or two {\em smaller} terms. In particular, for composition and decomposition
steps, $h(\bm X \bm Y) = h(\bm X) + h(\bm Y|\bm X)$ by definition of $h(\bm Y|\bm X)$. For
monotonicity steps, $h(\bm X\bm Y) \geq h(\bm X)$ by Eq.~\eqref{eq:h:monotonicity}, whereas
for submodularity steps, $h(\bm Y|\bm X) \geq h(\bm Y|\bm X\bm Z)$ by
Eq.~\eqref{eq:h:submodularity:equiv}.

As an example, the right column of Table~\ref{table:ps} shows a proof sequence that proves
the integral Shannon inequality~\eqref{eq:shannon:example:integral}. One can verify that
applying these steps one by one transforms the RHS of~\eqref{eq:shannon:example:integral}
into the LHS. But now the question is: How do we constructively prove the existence of such
a proof sequence for any integral Shannon inequality?

To that end, we take the identity form, which is Eq.~\eqref{eq:shannon:example:identity} in
our example. Identity~\eqref{eq:shannon:example:identity} has three {\em
unconditional}~\footnote{Recall that a term $h(\bm Y|\bm X)$ is called {\em unconditional}
iff $\bm X = \emptyset$. For example, $h(XY)$ is unconditional, whereas $h(Y|X)$ is not.}
source terms and two target terms\footnote{Target terms are always unconditional because the
underlying Shannon inequality always has the form of Eq.~\eqref{eqn:shannon:general}, where
all terms on the LHS have the form $\lambda_{\bm B}h(\bm B)$.}. It is not a coincidence that
the number of unconditional source terms is greater than or equal to the number of target
terms.
This is because the polymatroid $h(\bm X) \defeq 1$ for all $\bm X \neq \emptyset$ must satisfy the inequality:  then the LHS is the number of targets, and the RHS is {\em at most}\footnote{Note that the second submodularity $-h(Y) - h(ZW) +  h(YZW)$ in Eq.~\eqref{eq:shannon:example:identity} contributes $-1$ to the RHS, hence we say ``at most''.} the number of unconditional source terms, and the claim follows immediately.
This claim can be quite useful: {\em As long as we sill have at
least one target term, we must have at least one unconditional source term.}

Pick an arbitrary unconditional source term in Identity~\eqref{eq:shannon:example:identity}.
Let's say we pick $h(YZ)$, as shown in the first row of Table~\ref{table:ps}. Note that
$h(YZ)$ is not a target term. Therefore, $h(YZ)$ must be getting {\em cancelled} by some
term on the RHS of {\em Identity}~\eqref{eq:shannon:example:identity}. In our example,
$h(YZ)$ is getting cancelled by the term $-h(YZ)$ from the first submodularity. To get rid
of the term $h(YZ)$ while maintaining the identity, we apply two proof steps:
\begin{align*}
    h(YZ) &\to h(Y) + h(Z|Y) &\text{(decomposition step)}\\
    h(Z|Y) &\to h(Z|XY) &\text{(submodularity step)}
\end{align*}
We also drop the submodularity that was used to cancel $h(YZ)$.
One can verify the resulting identity, depicted in the second row of Table~\ref{table:ps},
is indeed a valid identity.

Now, we pick another unconditional source term from the new identity, say $h(XY)$. Since
$h(XY)$ is not a target term, it must be getting cancelled by some term on the RHS. In this
case, $h(XY)$ is not getting cancelled by a submodularity term, but rather by the
conditional $h(Z|XY)$. We apply a single composition step $h(XY) + h(Z|XY) \to h(XYZ)$, thus
resulting in another valid identity, shown in the third row of Table~\ref{table:ps}.

We pick another unconditional source term from the new identity, say $h(XYZ)$. This time,
$h(XYZ)$ is a target term, which allows us to cancel it from both sides. We continue this
process until we reach the trivial identity $0=0$. This proof sequence construction works on
the identity form of {\em any} integral Shannon inequality~\eqref{eqn:shannon:general}.

\begin{table*}
\[
\small
\begin{array}{|c|c|}
    \hline
    \rowcolor{lightgray}\text{\bf Identity} & \text{\bf Proof Steps}\\\hline
    \begin{aligned}[c]
        h(XYZ) + h(YZW) =& h(XY) \positiveterm{$+h(YZ)$} + h(ZW)\\
            &+ (-h(XY) \negativeterm{$- h(YZ)$} + h(Y) + h(XYZ))\\
            &+ (-h(Y) - h(ZW) +  h(YZW))
    \end{aligned}&
    \begin{aligned}[c]
        \positiveterm{$h(YZ)$} &\to h(Y) + h(Z|Y)\\
        h(Z|Y) &\to h(Z|XY)
    \end{aligned}\\\hline
    \begin{aligned}[c]
        h(XYZ) + h(YZW) =& \positiveterm{$h(XY)$} + h(Y) + \negativeterm{$h(Z|XY)$} + h(ZW)\\
            &+ (-h(Y) - h(ZW) +  h(YZW))
    \end{aligned}&
    \begin{aligned}[c]
        \positiveterm{$h(XY)$} + \negativeterm{$h(Z|XY)$} &\to h(XYZ)
    \end{aligned}\\\hline
        \begin{aligned}[c]
        \negativeterm{$h(XYZ)$} + h(YZW) =& \positiveterm{$h(XYZ)$} + h(Y) + h(ZW)\\
            &+ (-h(Y) - h(ZW) +  h(YZW))
    \end{aligned}&
    \begin{aligned}[c]
        \text{--}
    \end{aligned}\\\hline
    \begin{aligned}[c]
        h(YZW) =& \positiveterm{$h(Y)$} + h(ZW)\\
            &+ (\negativeterm{$-h(Y)$} - h(ZW) +  h(YZW))
    \end{aligned}&
    \begin{aligned}[c]
        \positiveterm{$h(Y)$} &\to h(Y|ZW)
    \end{aligned}\\\hline
    \begin{aligned}[c]
        h(YZW) =& \negativeterm{$h(Y|ZW)$} + \positiveterm{$h(ZW)$}
    \end{aligned}&
    \begin{aligned}[c]
        \positiveterm{$h(ZW)$} + \negativeterm{$h(Y|ZW)$}&\to h(YZW)
    \end{aligned}\\\hline
    \begin{aligned}[c]
        \negativeterm{$h(YZW)$} =& \positiveterm{$h(YZW)$}
    \end{aligned}&
    \begin{aligned}[c]
        \text{--}
    \end{aligned}\\\hline
    \begin{aligned}[c]
        0 = 0
    \end{aligned}&
    \begin{aligned}[c]
        \text{--}
    \end{aligned}\\\hline
\end{array}
\]
\caption{Proof sequence construction for the Identity~\eqref{eq:shannon:example:identity}.
    The left column shows a sequence of identities, starting from~\eqref{eq:shannon:example:identity},
    while the right column shows the proof steps that transform each identity to the next.
    As long as the identity still has target terms on the LHS, it must still have
    at least one {\em unconditional} source term on the RHS.
    We repeatedly pick one \positiveterm{source term}
    and look for a \negativeterm{term} that cancels it.
    Depending on the kind of \negativeterm{term} we find, we apply some proof steps
    to get a new ``simpler'' identity.
    We keep doing this until we reach the trivial identity $0=0$.}
\label{table:ps}
\end{table*}

\subsection{Reset Lemma}
\label{subsec:reset}
Our second lemma, called the {\em Reset Lemma}, says the following: Given any integral
Shannon inequality of the form~\eqref{eqn:shannon:general}, let $h(\bm Y|\emptyset)$ be any
{\em unconditional} term  on the RHS. Then, we can always drop $h(\bm Y|\emptyset)$ from the
RHS and lose {\em at most} one term $h(\bm B)$ from the LHS, in order to obtain another
valid integral Shannon inequality.\footnote{ In the process, we might lose more terms from
the RHS as well.} For example, consider the Shannon inequality from
Eq.~\eqref{eq:shannon:example:integral}, and suppose we want to drop the unconditional term
$h(XY)$ from the RHS. Then, the lemma promises that this can only cost us losing one of the
two terms $h(XYZ)$ or $h(YZW)$ from the LHS, but never both! Indeed, in this case, this is
one valid Shannon inequality that we can obtain:
\begin{align}
    h(YZW) \quad\leq\quad h(YZ) + h(ZW)
    \label{eq:shannon:example:reset}
\end{align}
In particular, the above is a valid Shannon inequality because it is a sum of the following
monotonicity and submodularity:
\begin{align*}
    -h(YZ)+h(Y) &\quad\leq\quad 0 &\text{(monotonicity)}\\
    -h(Y) - h(ZW) +  h(YZW) &\quad\leq\quad 0 &\text{(submodularity)}
\end{align*}
But how do we systematically prove the existence of such a Shannon
inequality~\eqref{eq:shannon:example:reset}? To that end, we start from the identity
form~\eqref{eq:shannon:example:identity} of the original Shannon
inequality~\eqref{eq:shannon:example:integral}. The proof is somewhat similar to the proof
sequence construction. In particular, consider the source term $h(XY)$ on the RHS of
Identity~\eqref{eq:shannon:example:identity}. Since $h(XY)$ is not a target term, it must be
getting cancelled by some term on the RHS, which in this case is the term $-h(XY)$ from the
first submodularity in~\eqref{eq:shannon:example:identity}. We replace $h(XY)$ with $h(XYZ)$
and replace the first submodularity with the monotonicity $-h(YZ) + h(Y) \leq 0$, thus
resulting in the following identity: {\allowdisplaybreaks[0]
\begin{align*}
    h(XYZ) + h(YZW) =& h(XYZ) + h(YZ) + h(ZW)\\
        & + {\color{blue}\underbrace{\color{black}(-h(YZ) + h(Y))}_{\text{($\leq 0$ by monotonicity)}}}\nonumber\\
        &+ {\color{blue}\underbrace{\color{black}(-h(Y) - h(ZW) +  h(YZW))}_{\text{($\leq 0$ by submodularity)}}}\nonumber
\end{align*}
} One can verify that the above is indeed a valid identity. Instead of dropping $h(XY)$ from
the RHS of the original identity~\eqref{eq:shannon:example:identity}, our target now is to
drop $h(XYZ)$ from the RHS of this new identity above. We continue this process inductively.
In this case, we are lucky because $h(XYZ)$ is a target term, hence we drop it from both
sides, thus obtaining~\eqref{eq:shannon:example:reset}. This inductive proof strategy can
prove the Reset lemma for {\em any} integral Shannon
inequality~\eqref{eqn:shannon:general}~\cite{theoretics:13722}.

\section{Evaluating DDRs using the $\panda$ Algorithm}
\label{sec:panda}


Given a CQ $Q$, a set of statistics $\calS$, and a database instance $\calD\models\calS$, we
saw in Section~\ref{sec:adaptive:query:plans} how to reduce evaluating $Q$ over $\calD$ to evaluating a set
of DDRs, each of which has an output size bound of $N^{\subw(Q, \calS)}$. In this section,
we present the $\panda$ algorithm for evaluating any DDR in time proportional to its output
size bound (with an extra $\log N$-factor). This in turn allows us to evaluate the original
CQ $Q$ in time $O(N^{\subw(Q, \calS)}\cdot \log N+\OUT)$, as desired.
The original $\panda$ algorithm~\cite{DBLP:conf/pods/Khamis0S17,theoretics:13722} had an extra $\polylog(N)$ factor in the runtime,
but an improved version, called $\pandaexpress$~\cite{pandaexpress}, was recently proposed
that replaces this factor with a single $\log N$ factor.\footnote{Another algorithm~\cite{jaguar} replaces the $\polylog(N)$ factor with $N^{\epsilon}$ where $\epsilon$ can be arbitrarily small.}
We present here the $\pandaexpress$ version through an example, and leave the details to~\cite{pandaexpress}.

Our $\pandaexpress$ algorithm for DDRs relies on the lemmas concerning Shannon inequalities from
Section~\ref{sec:proof:sequences}. In order to explain the algorithm, we start in
Section~\ref{subsec:panda:ddr:proof} by proposing a new proof~\cite{pandaexpress} of the
output size bound of a DDR (Theorem~\ref{thm:ddr:bound:shannon}). Then, in
Section~\ref{subsec:panda:ddr:algorithm}, we show how to turn this proof into an actual
algorithm for evaluating the DDR.

\subsection{A New Proof of the Output Size Bound of a DDR}
\label{subsec:panda:ddr:proof}

We give an alternative proof of Theorem~\ref{thm:ddr:bound:shannon} that is based on {\em
sub-probability measures}. To that end, we start with a brief overview of sub-probability
measures. A {\em sub-probability measure} is just like a probability distribution, except
that we only require the total mass to be {\em at most} 1 (instead of exactly 1). In
particular, given a set of variables $\bm X$, a sub-probability measure over $\bm X$ is a
function $p_{\bm X}:\dom^{\bm X} \to [0, 1]$ such that $\sum_{\bm x\in\dom^{\bm X}} p_{\bm
X}(\bm x) \leq 1$. Similarly, we define $p_{\bm Y|\bm X=\bm x}(\bm y)$ to be a {\em
conditional} sub-probability measure, i.e.,~ a sub-probability measure over $\bm Y$ for
every value $\bm X =\bm x$, and we abbreviate it as $p_{\bm Y|\bm X}(\bm y|\bm x)$. Although
sub-probability measures are not real probability distributions, we extend the standard
definitions of {\em support}, {\em marginal} and {\em conditional} to them. In particular,
given a sub-probability measure $p_{\bm X\bm Y}$ over $\bm X\bm Y$, we define its {\em
marginal} and {\em conditional} on $\bm X$, respectively as follows:
\begin{align*}
    p_{\bm X}(\bm x) \defeq \sum_{\bm y} p_{\bm X\bm Y}(\bm x, \bm y), \quad\quad\quad
    p_{\bm Y|\bm X}(\bm y|\bm x) \defeq
        \frac{p_{\bm X\bm Y}(\bm x, \bm y)}{p_{\bm X}(\bm x)}
\end{align*}
Just like for probability distributions, {\em if all non-zero probabilities in a
sub-probability measure are at least $1/B$ for some constant $B>0$, then the number of
non-zero probabilities (i.e.~the support size) is at most $B$.}

To demonstrate the new proof of Theorem~\ref{thm:ddr:bound:shannon}, we re-prove the
bound~\eqref{eq:ddr:example:size:bound} using sub-probability measures. In particular, we
take the DDR from Eq.~\eqref{eq:4cycle:ddr:example} and show that the Shannon inequality
from Eq.~\eqref{eq:shannon:example:integral} implies an output size bound of $B \defeq
N^{3/2}$ for this DDR. (Recall that this Shannon inequality came from an optimal dual
solution to the LP in Eq.~\eqref{eq:lp1}.) To that end, we use the proof sequence of
Inequality~\eqref{eq:shannon:example:integral} from Table~\ref{table:ps}. This proof
sequence transforms the RHS of Inequality~\eqref{eq:shannon:example:integral} into the LHS.
We associate each source term on the RHS of Inequality~\eqref{eq:shannon:example:integral}
with a sub-probability measure. Every time we perform a proof step transforming one or two
entropy terms into new terms, we construct new sub-probability measures to associate with
the new terms, thus keeping the RHS of the Shannon inequality in lockstep with the
sub-probability measures. The process is depicted in Table~\ref{table:sub:prob}.

In particular, we start by defining three sub-probability measures $p_{XY}(x, y)$,
$p_{YZ}(y, z)$, and $p_{ZW}(z, w)$ to associate with the three source terms $h(XY)$,
$h(YZ)$, and $h(ZW)$ on the RHS of Inequality~\eqref{eq:shannon:example:integral}. All three
are defined to be uniform over the supports of the underlying input relations $R(X, Y)$,
$S(Y, Z)$, and $T(Z, W)$, respectively. Note that for every tuple $(x, y, z, w)$ in the
join, $\Join_\square$, of the input relations, the following inequality holds:
\begin{align}
    p_{XY}(x, y) \cdot p_{YZ}(y, z) \cdot p_{ZW}(z, w) \quad\geq\quad \frac{1}{N^3}=\frac{1}{B^2}
    \label{eq:sub:prob:inequality}
\end{align}
The first proof step (second row of Table~\ref{table:sub:prob}) transforms $h(YZ)$ into
$h(Y) + h(Z|Y)$. The corresponding step in the language of sub-probability measures is
replacing $p_{YZ}(y, z)$ in inequality~\eqref{eq:sub:prob:inequality} with the product of
its marginal $p_Y(y)$ and conditional $p_{Z|Y}(z|y)$. Note that this replacement preserves
the inequality.

The second step transforms $h(Z|Y)$ into $h(Z|XY)$. The corresponding step is simply to
replace $p_{Z|Y}(z|y)$ with $p_{Z|XY}(z|xy)\defeq p_{Z|Y}(z|y)$ in
Inequality~\eqref{eq:sub:prob:inequality}, thus leaving the inequality intact.
The third
step transforms $h(XY) + h(Z|XY)$ into $h(XYZ)$. The corresponding step is to replace
$p_{XY}(x, y) \cdot p_{Z|XY}(z|xy)$ with $p_{XYZ}(x, y, z)$ which is defined as their
product. This again preserves Inequality~\eqref{eq:sub:prob:inequality}.

When we reach the end of the proof sequence, we obtain two sub-probability measures
$p_{XYZ}(x, y, z)$ and $p_{YZW}(y, z, w)$, and inequality~\eqref{eq:sub:prob:inequality}
becomes:
\begin{align}
    p_{XYZ}(x, y, z) \cdot p_{YZW}(y, z, w) \quad\geq\quad \frac{1}{N^3}=\frac{1}{B^2},
    \label{eq:sub:prob:inequality:final}
\end{align}
which holds for all tuples $(x, y, z, w)\in\Join_\square$.
Inequality~\eqref{eq:sub:prob:inequality:final} says that the geometric mean of $p_{XYZ}(x,
y, z)$ and $p_{YZW}(y, z, w)$ must be at least $1/B$. Therefore, for every tuple $(x, y, z,
w)\in\Join_\square$, at least one of $p_{XYZ}(x, y, z)$ and $p_{YZW}(y, z, w)$ must be $\geq
1/B$. We define $\ov p_{XYZ}$ and $\ov p_{YZW}$ to be ``truncated'' versions of $p_{XYZ}$
and $p_{YZW}$, respectively, where we only keep tuples with probability at least $1/B$. And
now, we construct output relations $A_{11}'(X, Y, Z)$ and $A_{21}'(Y, Z, W)$ to the DDR from
Eq.~\eqref{eq:4cycle:ddr:example} by taking the {\em supports} of $\ov p_{XYZ}$ and $\ov p_{YZW}$,
respectively. One can verify that $(A_{11}', A_{21}')$ is indeed a valid output to the DDR.
Moreover, since $\ov p_{XYZ}$ and $\ov p_{YZW}$ are sub-probability measures where all
non-zero probabilities are at least $1/B$, their supports cannot have sizes larger than $B$.
Hence, the constructed output $(A_{11}', A_{21}')$ has size at most $B$, thus proving
the bound~\eqref{eq:ddr:example:size:bound}.

\begin{table*}
\[
\begin{array}{|c|c|}
    \hline
    \rowcolor{lightgray} \text{\bf Shannon Inequality and Steps} & \text{\bf Probabilistic Inequality and Steps} \\\hline
    \rowcolor{lightergray}
    \small
    \begin{multlined}[t]
    h(XYZ)+h(YZW)\leq\\ {\color{blue}h(XY)+h(YZ)+h(ZW) \leq 3 \log_N N}
    \end{multlined}&
    \begin{aligned}[t]
       & \color{red}p_{XY}(x, y) \cdot p_{YZ}(y, z) {\color{red}\cdot p_{ZW}(z, w) \geq 1/N^3}\\
        p_{XY}(x, y) &\defeq 1/N,\quad p_{YZ}(y, z) \defeq 1/N,\quad p_{ZW}(z, w) \defeq 1/N
    \end{aligned}\\\hline
    \begin{aligned}[t]
    \color{blue}{h(YZ) \quad\to\quad h(Y) + h(Z |Y)}
    \end{aligned}&
    \begin{aligned}[t]
        \color{red}p_{YZ}(y, z) &\color{red}\to\quad p_Y(y) \cdot p_{Z|Y}(z|y)\\
        p_Y(y) &\defeq \sum_{z} p_{YZ}(y, z)=\frac{\deg_S(Z|Y=y)}{N}\\
        p_{Z|Y}(z|y) &\defeq \frac{p_{YZ}(y, z)}{p_Y(y)} = \frac{1}{\deg_S(Z|Y=y)}
    \end{aligned}\\\hline
    \begin{aligned}[t]
        \color{blue}h(Z|Y) \quad\to\quad h(Z|XY)
    \end{aligned}&
    \begin{aligned}[t]
        \color{red} p_{Z|Y}(z|y) & \color{red}\to\quad p_{Z|XY}(z|xy)\\
        p_{Z|XY}(z|xy) &\defeq p_{Z|Y}(z|y) \cdot \frac{1}{\deg_S(Z|Y=y)}
    \end{aligned}\\\hline
    \begin{aligned}[t]
        \color{blue}h(XY)+h(Z|XY)\to h(XYZ)
    \end{aligned}&
    \begin{aligned}[t]
        \color{red}p_{XY}(x, y) &\color{red}\cdot p_{Z|XY}(z|xy) \to\quad p_{XYZ}(x, y, z)\\
        p_{XYZ}(x, y, z) &\defeq p_{XY}(x, y) \cdot p_{Z|XY}(z|xy) = \frac{1}{N \cdot \deg_S(Z|Y=y)}
    \end{aligned}\\\hline
    \begin{aligned}[t]
        \color{blue}h(Y) \quad\to\quad h(Y|ZW)
    \end{aligned}&
    \begin{aligned}[t]
        \color{red}p_Y(y) &\color{red}\to\quad p_{Y|ZW}(y|zw)\\
        p_{Y|ZW}(y|zw) &\defeq p_Y(y) = \frac{\deg_S(Z|Y=y)}{N}
    \end{aligned}\\\hline
    \begin{aligned}[t]
        \color{blue}h(ZW) + h(Y|ZW)\to h(Y|ZW)
    \end{aligned}&
    \begin{aligned}[t]
        \color{red}p_{ZW}(z, w) &\color{red}\cdot p_{Y|ZW}(y|zw) \to\quad p_{YZW}(y, z, w)\\
        p_{YZW}(y, z, w) &\defeq p_{ZW}(z, w) \cdot p_{Y|ZW}(y|zw) = \frac{\deg_S(Z|Y=y)}{N^2}
    \end{aligned}\\\hline
    \rowcolor{lightergray}
    \small
    \begin{multlined}
        h(XYZ)+h(YZW)\leq\\ {\color{blue}h(XYZ)+h(YZW) \leq 3 \log_N N}
    \end{multlined}&
    \begin{aligned}
        {\color{red}p_{XYZ}(x, y, z)\cdot p_{YZW}(y, z, w) \geq 1/N^3}
    \end{aligned}\\\hline
\end{array}
\]
\caption{A new proof of the output size bound of the DDR from Eq.~\eqref{eq:4cycle:ddr:example}
using sub-probability measures.
The left column shows the proof steps from Table~\ref{table:ps}
for the Shannon inequality~\eqref{eq:shannon:example:integral},
whereas the right column shows the corresponding steps in the world of sub-probability measures.}
\label{table:sub:prob}
\end{table*}

\subsection{The $\pandaexpress$ Algorithm for DDRs}
\label{subsec:panda:ddr:algorithm}
The size bound proof from Section~\ref{subsec:panda:ddr:proof} can {\em almost} be
immediately turned into an actual algorithm for evaluating the DDR from
Eq.~\eqref{eq:4cycle:ddr:example} in time $O(B) = O(N^{3/2})$. To that end, we just follow
the proof steps from Table~\ref{table:sub:prob}, and compute the sub-probability measures
one by one, with a {\em small twist}: Instead of computing the full sub-probability measures
$p_{XYZ}$ and $p_{YZW}$, we only compute their ``truncated'' versions $\ov p_{XYZ}$ and $\ov
p_{YZW}$ where we only keep tuples with probability at least $1/B$. From
Table~\ref{table:sub:prob}, note that tuples $(x, y, z, w)\in\Join_\square$ where
$p_{XYZ}(x, y, z) \geq 1/B$ are exactly those where $\deg_S(Z|Y=y) \leq N^{1/2}$. In
contrast, tuples where $p_{YZW} \geq 1/B$ are exactly those where $\deg_S(Z|Y=y) \geq
N^{1/2}$. In particular, one can think of the algorithm in this special case as partitioning
the relations $S(Y, Z)$ into ``light'' and ``heavy'' tuples based on the degree of $Y$ in
$S$ being at most or at least $N^{1/2}$, respectively. For light $Y$ tuples, we compute the
join with $R(X, Y)$ to produce the output $A_{11}'(X, Y, Z)$ of the DDR in time
$O(N^{3/2})$. The number of heavy $Y$-values cannot exceed $N^{1/2}$, and the algorithm
takes the Cartesian product of these heavy $Y$-values with $T(Z, W)$ to produce the output
$A_{21}'(Y, Z, W)$ of the DDR also in time $O(N^{3/2})$. The fact that at least one
probability term on the LHS of Inequality~\eqref{eq:sub:prob:inequality:final} has to be at
least $1/B$ is not a coincidence: In general, whenever one probability term in the
inequality drops below $1/B$, the $\pandaexpress$ algorithm~\cite{pandaexpress} uses the Reset
lemma (Section~\ref{subsec:reset}) to drop that term from the inequality while ensuring that
the remaining terms still have a geometric mean of at least $1/B$ (hence one of them must be
at least $1/B$).

In general~\cite{pandaexpress}, $\pandaexpress$ uses sub-probability measures to find a data
partitioning strategy that evaluates any DDR in time proportional to its output size bound.
Beyond threshold-based partitioning (like $\deg_S(Z|Y=y) \leq N^{1/2}$ above), the
algorithm may also compare two degrees against each other via {\em hyperplane
partitioning}~\cite{pandaexpress}, and may partition not just input relations but also
{\em intermediate} ones computed along the way.

\section{Extensions}
\label{sec:extensions}
We now discuss several extensions of the submodular width and the $\panda$ framework along different dimensions.

\subsection{Other Classes of Queries: \scq, \faq}
\label{sec:extensions:counting}

Both the definition of the submodular width from Eq.~\eqref{eq:subw} and the associated
$\panda$ algorithm we have seen are defined for CQs. But now we ask: What about other
classes of queries, for example, the {\em counting version} of CQs
(\scq)~\cite{10.1145/2448496.2448508}, where we want to count satisfying assignments to CQs?
To be more general, we consider aggregate queries over {\em semirings}, which are special
cases of {\em functional aggregate queries (\faqs)}~\cite{DBLP:conf/pods/KhamisNR16}. In
such queries, we are given a semiring $(\bm K, \oplus, \otimes)$ and a {\em $\bm
K$-annotated} database instance $\calD$, i.e., where each tuple in each relation $R$ is
annotated with a value from $\bm K$~\cite{provenance-semirings}. The goal is to compute a
sum of a product of the input relations' annotations over the semiring. For example, the
following is the {semiring version} of $\cycle$ from Eq.~\eqref{eq:4cycle}, for any semiring
$(\bm K, \oplus, \otimes)$:
\begin{align}
    \cycle^{\bm K}(X, Y) &\quad\cd\quad \bigoplus_{Z, W} R(X, Y) \otimes S(Y, Z) \otimes T(Z, W) \otimes U(W, X)
\end{align}
Depending on the semiring $(\bm K, \oplus, \otimes)$, the above query can compute different
things: For the Boolean semiring, it reduces back to the original CQ $\cycle$; for $(\N, +,
\times)$, it can {\em count} the number of cycles for a given $(X, Y)$; and for $(\R, \min,
+)$, it can find the {\em minimum-weight} cycles for a given $(X, Y)$. We recognize two
classes of semirings:

\begin{itemize}[leftmargin=*]
    \item {\em Idempotent semirings}: These are semirings where the $\oplus$ operator is
    {\em idempotent}, meaning that $a \oplus a = a$ for any $a \in \bm K$. These semirings
    include the Boolean semiring, $(\R, \min, +)$, $(\R, \min, \max)$, and $(\R_+, \min,
    \times)$ among others. For all these semirings, the $\panda$ algorithm works just fine
    and achieves the same runtime of $O(N^{\subw(Q, \calS)}\cdot \log N + \OUT)$ using the
    definition of the submodular width from Eq.~\eqref{eq:subw}.
    \item {\em Non-idempotent semirings}: The most notable example is $(\N, +, \times)$,
    which is needed to formulate count queries, e.g., \scqs. For these semirings, the
    $\panda$ algorithm breaks down because the data partitioning used in $\panda$ is {\em
    not} guaranteed in general to produce {\em disjoint} parts, thus we cannot simply count
    by summing up the counts from each part. It remains an open problem whether we can solve
    \scqs (or more generally aggregate queries over non-idempotent semirings) in submodular
    width time $O(N^{\subw(Q, \calS)}\cdot \polylog(N) + \OUT)$. A relaxed version of the
    submodular width, called the {\em sharp-submodular width}, that is tailored to counting
    was introduced in~\cite{10.1145/3426865} along with a matching algorithm. There are
    examples where the sharp-submodular width is strictly larger than the submodular
    width~\cite{DBLP:conf/stoc/BringmannG25}. Moreover, the sharp-submodular width remains
    an unsatisfactory measure for counting since there are queries where counting can be
    done faster\footnote{These faster algorithms are still combinatorial.} than what the
    sharp-submodular width suggests~\cite{DBLP:conf/stoc/BringmannG25}.
\end{itemize}

\subsection{Other Classes of Statistics: $\ell_k$-norms of degree sequences}
\label{sec:extensions:lpnorm}

So far, the statistics $\calS$ we have considered only include {\em degree constraints}.
However, it was shown recently that the polymatroid bound (Theorems~\ref{thm:cq:bound} and
~\ref{thm:ddr:bound}) applies to a much more general class of statistics, called {\em
$\ell_k$-norms constraints}~\cite{2021arXiv211201003V,10.1145/3651597}. The idea of these
statistics is as follows: For simplicity, consider a binary relation $R(X, Y)$, and the
vector $\left(\deg_{R}(Y|X=x)\right)_{x \in \dom^X}$ formed by collecting the degrees of all
values $x$. Suppose we were given an upper bound $N_{Y|X, k}$ on the $\ell_k$-norm of this
vector, for some natural number $k$, i.e., we were told:
\begin{align}
    \left(\sum_{x \in \dom^X} \left(\deg_{R}(Y|X=x)\right)^k\right)^{1/k} \quad\leq\quad N_{Y|X,k}
\end{align}
Then, we can utilize this information to potentially tighten the polymatroid bound
(Eq.~\eqref{eq:entropic:polymatroid:bound}) by adding the following constraint on $h$ as
part of $h \models \calS$:
\begin{align}
    \frac{1}{k} h(X) + h(Y|X) \quad\leq\quad \log_N N_{Y|X,k}
    \label{eq:lk-norm}
\end{align}
Note that the maximum degree $\deg_R(Y|X)$ is simply the $\ell_\infty$-norm of the degree vector,
hence $\ell_k$-norms constraints strictly generalize degree constraints.
In particular, Eq.~\eqref{eq:h:satisfies:dc} is a special case of Eq.~\eqref{eq:lk-norm} when $k = \infty$.

The reason the polymatroid bound still holds is as follows: Let's revisit $\cyclefull$ from
Section~\ref{sec:static:query:plans}, and suppose we were given an upper bound $N_{Y|X,2}$ on the
$\ell_2$-norm of the degree vector of $R(X, Y)$. Consider the probability distribution from
Figure~\ref{fig:4cycle:data}, and specifically the marginal distribution $p_{XY}$ on $XY$,
whose support is contained in $R(X, Y)$. Extend this marginal by creating an identical copy
$Y'$ of $Y$ that is independent of $Y$ given $X$. Note that the support of this distribution
is contained in $R(X, Y) \Join R(X, Y')$, and the size of this join is exactly
$N_{Y|X,2}^2$. Because $Y$ and $Y'$ are independent and identically distributed given $X$,
we get the following, which matches\footnote{To get an exact match, we need to scale the
function $h$ by $1/\log N$, similar to what we did in Section~\ref{sec:static:query:plans}.}
Eq.~\eqref{eq:lk-norm} for $k=2$: {\small
\begin{align}
    h(X) + 2h(Y|X) = h(XYY') \leq \log |R(X, Y) \Join R(X, Y')| \leq 2 \log N_{Y|X,2}
    \label{eq:l2-norm-example}
\end{align}
}

It is natural to extend the definition of the submodular width from Eq.~\eqref{eq:subw}
to include constraints of the form from Eq.~\eqref{eq:lk-norm}.
But now we ask: Can we really extend the $\panda$ algorithm to handle these more general constraints and still achieve the same runtime?
The answer is {\em yes}, and we demonstrate the idea here using the query $\cycle$.
In Section~\ref{subsec:panda:ddr:proof}, we had a cardinality constraint $|R(X, Y)| \leq N$,
which translates to $h(XY) \leq \log_N N$ in the entropy world, which in turn translates to
$p_{XY}(xy) \geq 1/N$ in the probability world.
To satisfy the latter, we defined $p_{XY}(xy) \defeq 1/N$.
Now suppose we have an upper bound on the $\ell_2$-norm of the degree vector $\left(\deg_{R}(Y|X=x)\right)_{x \in \dom^X}$.
In the entropy world, this translates to Eq.~\eqref{eq:l2-norm-example}.
To translate the latter into the probability world, we need to construct sub-probability measures
$p_{X}$ and $p_{Y|X}$ that satisfy:
\begin{align}
    p_{X}(x) \cdot \left(p_{Y|X}(y|x)\right)^2 \quad\geq\quad \frac{1}{N_{Y|X,2}^2}, \quad\text{ for all }(x, y) \in R
    \label{eq:lk-norm:probability}
\end{align}
To achieve this, we define $p_{X}$ and $p_{Y|X}$ as follows:
\begin{align*}
    p_{X}(x) \defeq \frac{\left(\deg_R(Y|X=x)\right)^2}{N_{Y|X,2}^2}, \quad\quad
    p_{Y|X}(y|x) \defeq \frac{1}{\deg_R(Y|X=x)}
\end{align*}
It is straightforward to verify that the above $p_{X}(x)$ and $p_{Y|X}(y|x)$ are valid sub-probability
measures and they satisfy Eq.~\eqref{eq:lk-norm:probability}.
This reasoning generalizes to all $\ell_k$-norm constraints, allowing us to generalize
$\panda$ to handle these constraints.

\subsection{Other Techniques: Fast Matrix Multiplication}
\label{sec:extensions:omega}

For all CQs that have been studied in the literature to date,
the submodular width captures the best known complexity among all combinatorial algorithms.
However, outside combinatorial algorithms, certain queries are known to admit faster algorithms
that use {\em fast matrix multiplication} (FMM).
The complexities of these algorithms are often written in terms of the {\em matrix multiplication exponent} $\omega$, which is the smallest exponent that allows us to multiply two $n \times n$ matrices in time $O(n^{\omega})$.
The best known value of $\omega$ to date is 2.371552~\cite{doi:10.1137/1.9781611977912.134}.
For example, consider the following query, which is the Boolean version of $\cycle$:
\begin{align}
    \cyclebool() &\quad\cd\quad R(X, Y) \wedge S(Y, Z) \wedge T(Z, W) \wedge U(W, X)
\end{align}
Just like $\cycle$, this query has a submodular width of 3/2, under statistics $\cyclestats$ from Eq.~\eqref{eq:4cycle:stats}.
Hence, $\panda$ can only solve this query in time $O(N^{3/2})$.
However, this query admits a faster algorithm that runs in time $O(N^{\frac{4\omega-1}{2\omega+1}})$~\cite{yuster2004detecting, DBLP:journals/siamcomp/DalirrooyfardVW21}.
This begs the question: {\em Can we incorporate FMM into the $\panda$ framework, thus allowing $\panda$ to recover and generalize such algorithms?}

The answer is {\em yes} as was recently shown in~\cite{DBLP:journals/pacmmod/KhamisHS25}.
We summarize the key ideas below.
We start with incorporating FMM into the definition of the submodular width (Eq.~\eqref{eq:subw}).
To that end, the first step is to find an {\em information theoretic interpretation} of the complexity of FMM.
And before that, we revisit the complexity of FMM itself.
Suppose we want to multiply an $(m\times n)$-matrix with an $(n\times p)$-matrix.
Using square FMM, we can partition the two matrices into the largest possible square blocks of size $(q\times q)$,
where $q = \min(m, n, p)$, and then multiply each pair of square blocks using FMM in time $O(q^{\omega})$.
This algorithm takes the following time, where $\gamma\defeq \omega -2$:
\begin{align}
    \max(m \cdot n \cdot p^{\gamma}, \quad m \cdot n^{\gamma} \cdot p, \quad m^{\gamma} \cdot n \cdot p)
\end{align}
In our setting, while trying to evaluate a CQ like $\cyclebool$ above,
our two matrices will be relations like $R(X, Y)$ and $S(Y, Z)$.
In the information theory world, we don't directly have access to the dimensions $m, n, p$.
Instead, we can think of $h(X), h(Y), h(Z)$ as {\em proxies} for $\log m, \log n, \log p$ respectively.
By substituting in the runtime expression above, we get the following {\em information theoretic expression} for the complexity of FMM, on log scale:
\begin{align}
    \mm(X; Y; Z) \quad\defeq\quad \max(h(X)+h(Y)+\gamma h(Z),\quad h(X)+\gamma h(Y)+h(Z),\quad
    \gamma h(X)+h(Y)+h(Z))
    \label{eq:intro:mm}
\end{align}

Now that we have a way to express the FMM complexity in terms of a polymatroid $h$,
the second step to incorporate FMM into the submodular width definition is to find a notion of
(static) query plans that allows us to use FMM.
The original definition from Eq.~\eqref{eq:subw} uses tree decompositions (TDs) as query plans.
For the purpose of capturing FMM, we instead use the language of {\em variable elimination}~\cite{DECHTER199941,10.5555/1622756.1622765,DBLP:conf/pods/KhamisNR16}.
While the two notions are equivalent in the absence of FMM~\cite{DBLP:conf/pods/KhamisNR16},
variable elimination allows us to capture FMM more naturally.
For example, eliminating the variable $Y$ from $\cyclebool$ corresponds to computing the intermediate relation
$P(X, Z) \cd R(X, Y) \wedge S(Y, Z)$.
This intermediate can be computed in two ways:
\begin{itemize}
    \item We can either use a (combinatorial) join algorithm, which costs us $h(XYZ)$ on log scale.
    \item Or we can use FMM, which costs us $\mm(X; Y; Z)$ on log scale, as defined in Eq.~\eqref{eq:intro:mm}.
\end{itemize}
We choose the cheaper of the two options.
Extrapolating from this idea,
there is a natural way to incorporate FMM terms like $\mm(X; Y; Z)$ into the definition of the submodular width,
resulting in what is known as the {\em $\omega$-submodular width}, denoted $\osubw(Q, \calS)$ \cite{DBLP:journals/pacmmod/KhamisHS25}.
Moreover, there is a matching generalization of $\panda$ that can evaluate any Boolean CQ $Q$ in time
$O(N^{\osubw(Q, \calS)}\cdot \polylog(N))$.
For example, $\osubw(\cyclebool,\cyclestats)$ is $\frac{4\omega-1}{2\omega+1}$, thus matching the algorithm
from~\cite{yuster2004detecting, DBLP:journals/siamcomp/DalirrooyfardVW21}.
The $\omega$-submodular width matches the best known runtimes for $k$-cliques~\cite{10.1145/3618260.3649663},
and $k$-cycles~\cite{yuster2004detecting, DBLP:journals/siamcomp/DalirrooyfardVW21}, among others.
We leave the details to~\cite{DBLP:journals/pacmmod/KhamisHS25}.

\section{Open Problems}
\label{sec:open-problems}
We conclude with some major open problems in this domain.

\paragraph*{\bf Count queries (\scq)}
As mentioned earlier in Section~\ref{sec:extensions:counting}, it remains open whether we
can solve \scq queries (i.e., the counting version of CQs) in submodular width time. If this
is not possible, then the next natural questions is: {\em What is the right measure to
capture the complexity of \scqs?} In an attempt to answer this, a counting version of the
submodular width, called the {\em sharp-submodular width}, was proposed
in~\cite{10.1145/3426865}. However, it was shown recently~\cite{DBLP:conf/stoc/BringmannG25}
that this measure is not necessarily the best possible, thus the question remains open.

\paragraph*{\bf Conditional independence constraints}
We have seen earlier in Section~\ref{subsec:output-size-bound} how to translate statistics about
the data into constraints on the polymatroid $h$. However, there are other constraints that
are obeyed by $h$ that do not come from the data, but rather from the {\em query structure}
itself. For example, consider again the uniform probability distribution $p(x, y, z, w)$
over the output of $\cyclefull$ from Figure~\ref{fig:4cycle:data}. Note that this
distribution can be {\em factored} into a product of four functions, matching the four input
relations:
\begin{align}
p(x, y, z, w) \quad=\quad \alpha \cdot f_R(x, y) \cdot f_S(y, z) \cdot f_T(z, w) \cdot f_U(w, x)
\end{align}
where $\alpha$ is a normalizing constant, $f_R(x, y)$ is an {\em indicator function}
indicating whether the pair $(x, y)$ is in $R$, and similarly for $f_S, f_T, f_U$. As is
standard in {\em probabilistic graphical models}, this factorization implies that the two
variables $X$ and $Z$ are {\em conditionally independent} given the two variables $Y$ and
$W$ (and vice versa). In the language of information theory, this means that the {\em
conditional mutual information} $I(X; Z \mid YW)$ is zero, or equivalently:
\begin{align}
    h(XYW) + h(YZW) = h(XYZW) + h(YW)
\end{align}
Note that the above has the same format as submodularity (Eq.~\eqref{eq:h:submodularity}),
except that it is an equality. The above is an extra constraint on $h$ that we can add to
the polymatroid bound (Eq.~\eqref{eq:entropic:polymatroid:bound}) as well as the submodular
width (Eq.~\eqref{eq:subw}). Let's refer to the resulting quantities as the $\ci$-polymatroid bound and the $\ci$-submodular width, respectively, where $\ci$ stands for {\em conditional independence}. This naturally raises two
questions~\cite{kolaitis_et_al:DagRep.12.7.180}:
\begin{itemize}[leftmargin=*]
    \item Are there example queries\footnote{For the specific query $\cyclefull$, we can prove that adding $\ci$ constraints does not improve the bound, but there is no general proof for {\em any} query.} and statistics where the 
    $\ci$-polymatroid bound ($\ci$-submodular width) is strictly better than the polymatroid bound
    (submodular width)?
    \item If the answer to the above is yes, then can we really extend $\panda$ to achieve a
    runtime of the $\ci$-submodular width? This is currently not
    clear since the proof sequence construction from Section~\ref{subsec:proof-sequence}
    does not immediately extend to the new constraints on $h$.
\end{itemize}

\paragraph*{\bf Entropic version of the submodular width}
The submodular width in Eq.~\eqref{eq:subw} is currently defined over polymatroids. It
naturally has an entropic version, where polymatroids are replaced\footnote{We also need to
replace the maximum in Eq.~\eqref{eq:subw} by the supremum since the set of entropic
functions is not closed.} by the subset of entropic
functions~\cite{DBLP:conf/pods/Khamis0S17}. We refer to the resulting quantity as the {\em
entropic width}, denoted $\entw(Q, \calS)$. One way to think of it is that the entropic
width tightens the submodular width by adding extra constraints on $h$ representing {\em
non-Shannon inequalities}~\cite{DBLP:journals/tit/ZhangY97,zhang1998characterization}. It is
not known how to compute the entropic width since characterizing non-Shannon inequalities is
a major open problem. Recent work has shown that the entropic width does indeed provide a
lower bound on the {\em semiring circuit} needed to represent the output of a
CQ~\cite{10.1145/3651588}, thus no {\em ``semiring-based''} algorithm for solving CQs can
improve over the entropic bound by any polynomial factor. This leaves open the question:
{\em Can we really solve CQs in entropic width time?}

\paragraph*{\bf Tractability of computing the polymatroid bound}
It is not known whether computing the polymatroid
bound~\eqref{eq:entropic:polymatroid:bound} is in $\np$. On the positive side, it can be
computed in polynomial time when the input degree constraints are ``simple'' and the query
is a full CQ~\cite{DBLP:journals/pacmmod/ImMNP25}. A closely related question concerns the
length of the shortest proof sequence for a given Shannon-flow inequality: under certain
conditions, this length is polynomial in the number of input
variables~\cite{DBLP:journals/pacmmod/ImMNP25}. This is relevant because the runtime of
$\panda$ has an exponential dependency on the proof sequence length, so bounding this length
is crucial for obtaining practical runtime guarantees. It would be very valuable to identify
broader conditions under which short proof sequences are guaranteed to exist.

\paragraph*{\bf Lower bounds}
As mentioned before, the submodular width achieved by $\panda$ captures the best known
complexity for any CQ over combinatorial algorithms. The question is: {\em Is there a
concrete lower bound we can prove here based on existing complexity conjectures, or do we
need to make a new conjecture about this?} Section~\ref{sec:related:lower-bounds} reviews
some relevant lower bounds. One issue we have to deal with is how to concretely define the
class of {\em ``combinatorial algorithms''}. Prior
works~\cite{DBLP:conf/stoc/BringmannG25,fan_et_al:LIPIcs.ICALP.2023.127} aimed to bypass
this issue in different ways by working under the {\em ``semiring model''} or by
specifically targeting either queries over the $(\min, +)$ semiring or {\em full} queries in order to rule out fast
matrix multiplication (FMM)\footnote{In general, we cannot perform FMM over the $(\min, +)$ semiring because there is no subtraction.
Also, full queries cannot immediately leverage FMM since multiplying two matrices representing a join/projection of two relations $\pi_{X, Z} (R(X, Y)\Join S(Y, Z))$ does not allow us to list $Y$-values afterwards.
See~\cite{DBLP:conf/stoc/BringmannG25} for formal arguments.}. Another possible direction is to try to prove lower bounds for
the FMM-version of the submodular width, known as the {\em $\omega$-submodular
width}~\cite{DBLP:journals/pacmmod/KhamisHS25}, which was discussed earlier in
Section~\ref{sec:extensions:omega}.

\section*{Acknowledgments}
    We thank Dan Olteanu, Eden Chmielewski, Andrei Draghici, Hubie Chen, and Stefan Mengel
    for many discussions and comments that helped us improve the presentation of the $\panda$
    algorithm.
    This work was partially supported by NSF IIS 2314527, NSF SHF 2312195, NSF III 2507117, and a Microsoft professorship.

\bibliographystyle{siam}
\bibliography{bib}

\begin{thebibliography}{10}

\bibitem{DBLP:books/aw/AbiteboulHV95}
{\sc S.~Abiteboul, R.~Hull, and V.~Vianu}, {\em Foundations of Databases}, Addison-Wesley, 1995.

\bibitem{jaguar}
{\sc M.~{Abo Khamis} and H.~{Chen}}, {\em {Jaguar: A Primal Algorithm for Conjunctive Query Evaluation in Submodular-Width Time}}, Proc. {ACM} Manag. Data,  (2026).

\bibitem{10.1145/3426865}
{\sc M.~{Abo Khamis}, R.~R. Curtin, B.~Moseley, H.~Q. Ngo, X.~Nguyen, D.~Olteanu, and M.~Schleich}, {\em Functional aggregate queries with additive inequalities}, ACM Trans. Database Syst., 45 (2020).

\bibitem{10.1145/3651597}
{\sc M.~Abo~Khamis, V.~Nakos, D.~Olteanu, and D.~Suciu}, {\em Join size bounds using lp-norms on degree sequences}, Proc. ACM Manag. Data, 2 (2024).

\bibitem{DBLP:conf/pods/KhamisNR16}
{\sc M.~{Abo Khamis}, H.~Q. Ngo, and A.~Rudra}, {\em {FAQ:} questions asked frequently}, in Proceedings of the 35th {ACM} {PODS} 2016, San Francisco, CA, USA, June 26 - July 01, 2016, {ACM}, 2016, pp.~13--28.

\bibitem{DBLP:conf/pods/KhamisNS16}
{\sc M.~{Abo Khamis}, H.~Q. Ngo, and D.~Suciu}, {\em Computing join queries with functional dependencies}, in Proceedings of the 35th {ACM} {PODS} 2016, San Francisco, CA, USA, June 26 - July 01, 2016, {ACM}, 2016, pp.~327--342.

\bibitem{DBLP:conf/pods/Khamis0S17}
{\sc M.~Abo~Khamis, H.~Q. Ngo, and D.~Suciu}, {\em {What Do Shannon-type Inequalities, Submodular Width, and Disjunctive Datalog Have to Do with One Another?}}, in Proceedings of the 36th {ACM} {PODS} 2017, Chicago, IL, USA, May 14-19, 2017, {ACM}, 2017, pp.~429--444.
\newblock Extended version available at~\url{http://arxiv.org/abs/1612.02503}.

\bibitem{theoretics:13722}
{\sc M.~{Abo Khamis}, H.~Q. Ngo, and D.~Suciu}, {\em {PANDA: Query Evaluation in Submodular Width}}, TheoretiCS, Volume 4 (2025).

\bibitem{pandaexpress}
{\sc M.~{Abo Khamis}, H.~Q. {Ngo}, and D.~{Suciu}}, {\em {PANDAExpress: a Simpler and Faster PANDA Algorithm}}, Proc. {ACM} Manag. Data,  (2026).

\bibitem{MR599482}
{\sc N.~Alon}, {\em On the number of subgraphs of prescribed type of graphs with a given number of edges}, Israel J. Math., 38 (1981), pp.~116--130.

\bibitem{DBLP:journals/algorithmica/AlonYZ97}
{\sc N.~Alon, R.~Yuster, and U.~Zwick}, {\em Finding and counting given length cycles}, Algorithmica, 17 (1997), pp.~209--223.

\bibitem{DBLP:journals/siamcomp/AtseriasGM13}
{\sc A.~Atserias, M.~Grohe, and D.~Marx}, {\em Size bounds and query plans for relational joins}, {SIAM} J. Comput., 42 (2013), pp.~1737--1767.

\bibitem{DBLP:conf/csl/BaganDG07}
{\sc G.~Bagan, A.~Durand, and E.~Grandjean}, {\em On acyclic conjunctive queries and constant delay enumeration}, in Computer Science Logic, 21st International Workshop, {CSL} 2007, 16th Annual Conference of the EACSL, Lausanne, Switzerland, September 11-15, 2007, Proceedings, vol.~4646 of Lecture Notes in Computer Science, Springer, 2007, pp.~208--222.

\bibitem{2025arXiv250320438B}
{\sc C.~{Berkholz} and H.~{Vinall-Smeeth}}, {\em {Factorised Representations of Join Queries: Tight Bounds and a New Dichotomy}}, arXiv e-prints,  (2025), p.~arXiv:2503.20438.

\bibitem{DBLP:conf/stoc/BringmannG25}
{\sc K.~Bringmann and E.~Gorbachev}, {\em A fine-grained classification of subquadratic patterns for subgraph listing and friends}, in Proceedings of the 57th Annual {ACM} Symposium on Theory of Computing, {STOC} 2025, Prague, Czechia, June 23-27, 2025, {ACM}, 2025, pp.~2145--2156.

\bibitem{DBLP:journals/tit/ChanY02}
{\sc T.~H. Chan and R.~W. Yeung}, {\em On a relation between information inequalities and group theory}, {IEEE} Transactions on Information Theory, 48 (2002), pp.~1992--1995.

\bibitem{DBLP:conf/stoc/ChandraM77}
{\sc A.~K. Chandra and P.~M. Merlin}, {\em Optimal implementation of conjunctive queries in relational data bases}, in Proceedings of the 9th Annual {ACM} Symposium on Theory of Computing, May 4-6, 1977, Boulder, Colorado, {USA}, {ACM}, 1977, pp.~77--90.

\bibitem{DBLP:conf/pods/Chaudhuri98}
{\sc S.~Chaudhuri}, {\em An overview of query optimization in relational systems}, in Proceedings of the Seventeenth {ACM} {SIGACT-SIGMOD-SIGART} Symposium on Principles of Database Systems, June 1-3, 1998, Seattle, Washington, {USA}, {ACM} Press, 1998, pp.~34--43.

\bibitem{MR859293}
{\sc F.~R.~K. Chung, R.~L. Graham, P.~Frankl, and J.~B. Shearer}, {\em Some intersection theorems for ordered sets and graphs}, J. Combin. Theory Ser. A, 43 (1986), pp.~23--37.

\bibitem{10.1145/3618260.3649663}
{\sc M.~Dalirrooyfard, S.~Mathialagan, V.~V. Williams, and Y.~Xu}, {\em Towards optimal output-sensitive clique listing or: Listing cliques from smaller cliques}, in Proceedings of the 56th Annual ACM Symposium on Theory of Computing, STOC 2024, New York, NY, USA, 2024, Association for Computing Machinery, p.~923–934.

\bibitem{DBLP:journals/siamcomp/DalirrooyfardVW21}
{\sc M.~Dalirrooyfard, T.~Vuong, and V.~{Vassilevska Williams}}, {\em Graph pattern detection: Hardness for all induced patterns and faster noninduced cycles}, {SIAM} J. Comput., 50 (2021), pp.~1627--1662.

\bibitem{DBLP:journals/ai/Dechter99}
{\sc R.~Dechter}, {\em Bucket elimination: {A} unifying framework for reasoning}, Artif. Intell., 113 (1999), pp.~41--85.

\bibitem{DECHTER199941}
\leavevmode\vrule height 2pt depth -1.6pt width 23pt, {\em Bucket elimination: A unifying framework for reasoning}, Artificial Intelligence, 113 (1999), pp.~41--85.

\bibitem{deeds_et_al:LIPIcs.ICDT.2025.17}
{\sc K.~Deeds and T.~C. Merkl}, {\em {Partition Constraints for Conjunctive Queries: Bounds and Worst-Case Optimal Joins}}, in 28th International Conference on Database Theory (ICDT 2025), vol.~328 of Leibniz International Proceedings in Informatics (LIPIcs), Dagstuhl, Germany, 2025, Schloss Dagstuhl -- Leibniz-Zentrum f{\"u}r Informatik, pp.~17:1--17:18.

\bibitem{10.1145/3695838}
{\sc S.~Deep, H.~Zhao, A.~Z. Fan, and P.~Koutris}, {\em Output-sensitive conjunctive query evaluation}, Proc. ACM Manag. Data, 2 (2024).

\bibitem{DBLP:journals/ftdb/DingNC24}
{\sc B.~Ding, V.~R. Narasayya, and S.~Chaudhuri}, {\em Extensible query optimizers in practice}, Found. Trends Databases, 14 (2024), pp.~186--402.

\bibitem{10.1145/2448496.2448508}
{\sc A.~Durand and S.~Mengel}, {\em Structural tractability of counting of solutions to conjunctive queries}, in Proceedings of the 16th International Conference on Database Theory, ICDT '13, New York, NY, USA, 2013, Association for Computing Machinery, p.~81–92.

\bibitem{fan_et_al:LIPIcs.ICALP.2023.127}
{\sc A.~Z. Fan, P.~Koutris, and H.~Zhao}, {\em {The Fine-Grained Complexity of Boolean Conjunctive Queries and Sum-Product Problems}}, in 50th International Colloquium on Automata, Languages, and Programming (ICALP 2023), vol.~261 of Leibniz International Proceedings in Informatics (LIPIcs), Dagstuhl, Germany, 2023, Schloss Dagstuhl -- Leibniz-Zentrum f{\"u}r Informatik, pp.~127:1--127:20.

\bibitem{10.1145/3651588}
\leavevmode\vrule height 2pt depth -1.6pt width 23pt, {\em Tight bounds of circuits for sum-product queries}, Proc. ACM Manag. Data, 2 (2024).

\bibitem{MR2104047}
{\sc E.~Friedgut}, {\em Hypergraphs, entropy, and inequalities}, Amer. Math. Monthly, 111 (2004), pp.~749--760.

\bibitem{MR1639767}
{\sc E.~Friedgut and J.~Kahn}, {\em On the number of copies of one hypergraph in another}, Israel J. Math., 105 (1998), pp.~251--256.

\bibitem{friedgut-kahn-1998}
\leavevmode\vrule height 2pt depth -1.6pt width 23pt, {\em On the number of copies of one hypergraph in another}, Israel Journal of Mathematics, 105 (1998), pp.~251--256.

\bibitem{DBLP:conf/pods/GottlobGLS16}
{\sc G.~Gottlob, G.~Greco, N.~Leone, and F.~Scarcello}, {\em Hypertree decompositions: Questions and answers}, in Proceedings of the 35th {ACM} {SIGMOD-SIGACT-SIGAI} Symposium on Principles of Database Systems, {PODS} 2016, San Francisco, CA, USA, June 26 - July 01, 2016, {ACM}, 2016, pp.~57--74.

\bibitem{DBLP:conf/pods/GottlobLV09}
{\sc G.~Gottlob, S.~T. Lee, and G.~Valiant}, {\em Size and treewidth bounds for conjunctive queries}, in Proceedings of the Twenty-Eigth {ACM} {SIGMOD-SIGACT-SIGART} Symposium on Principles of Database Systems, {PODS} 2009, June 19 - July 1, 2009, Providence, Rhode Island, {USA}, {ACM}, 2009, pp.~45--54.

\bibitem{DBLP:journals/jacm/GottlobLVV12}
{\sc G.~Gottlob, S.~T. Lee, G.~Valiant, and P.~Valiant}, {\em Size and treewidth bounds for conjunctive queries}, J. {ACM}, 59 (2012), pp.~16:1--16:35.

\bibitem{GOTTLOB2002579}
{\sc G.~Gottlob, N.~Leone, and F.~Scarcello}, {\em Hypertree decompositions and tractable queries}, Journal of Computer and System Sciences, 64 (2002), pp.~579--627.

\bibitem{provenance-semirings}
{\sc T.~J. Green, G.~Karvounarakis, and V.~Tannen}, {\em Provenance semirings}, in Proceedings of the Twenty-Sixth ACM SIGMOD-SIGACT-SIGART Symposium on Principles of Database Systems, PODS '07, New York, NY, USA, 2007, Association for Computing Machinery, p.~31–40.

\bibitem{DBLP:conf/pods/GreenT17}
{\sc T.~J. Green and V.~Tannen}, {\em The semiring framework for database provenance}, in Proceedings of the 36th {ACM} {SIGMOD-SIGACT-SIGAI} Symposium on Principles of Database Systems, {PODS} 2017, Chicago, IL, USA, May 14-19, 2017, {ACM}, 2017, pp.~93--99.

\bibitem{DBLP:conf/soda/GroheM06}
{\sc M.~Grohe and D.~Marx}, {\em Constraint solving via fractional edge covers}, in Proceedings of the Seventeenth Annual {ACM-SIAM} Symposium on Discrete Algorithms, {SODA} 2006, Miami, Florida, USA, January 22-26, 2006, {ACM} Press, 2006, pp.~289--298.

\bibitem{DBLP:journals/talg/GroheM14}
\leavevmode\vrule height 2pt depth -1.6pt width 23pt, {\em Constraint solving via fractional edge covers}, {ACM} Trans. Algorithms, 11 (2014), pp.~4:1--4:20.

\bibitem{10.1145/3725241}
{\sc X.~Hu}, {\em Output-optimal algorithms for join-aggregate queries}, Proc. ACM Manag. Data, 3 (2025).

\bibitem{DBLP:journals/pacmmod/ImMNP25}
{\sc S.~Im, B.~Moseley, H.~Q. Ngo, and K.~Pruhs}, {\em Efficient algorithms for cardinality estimation and conjunctive query evaluation with simple degree constraints}, Proc. {ACM} Manag. Data, 3 (2025), pp.~96:1--96:26.

\bibitem{DBLP:conf/stoc/Itai77}
{\sc A.~Itai and M.~Rodeh}, {\em Finding a minimum circuit in a graph}, in Proceedings of the 9th Annual {ACM} Symposium on Theory of Computing, May 4-6, 1977, Boulder, Colorado, {USA}, {ACM}, 1977, pp.~1--10.

\bibitem{DBLP:journals/pacmmod/KhamisHS25}
{\sc M.~A. Khamis, X.~Hu, and D.~Suciu}, {\em Fast matrix multiplication meets the submodular width}, Proc. {ACM} Manag. Data, 3 (2025), pp.~98:1--98:26.

\bibitem{10.1145/2967101}
{\sc M.~A. Khamis, H.~Q. Ngo, C.~R\'{e}, and A.~Rudra}, {\em Joins via geometric resolutions: Worst case and beyond}, ACM Trans. Database Syst., 41 (2016).

\bibitem{kolaitis_et_al:DagRep.12.7.180}
{\sc P.~G. Kolaitis, A.~E. Romashchenko, M.~Studen\'{y}, D.~Suciu, and T.~A. Boege}, {\em {Algorithmic Aspects of Information Theory (Dagstuhl Seminar 22301)}}, Dagstuhl Reports, 12 (2023), pp.~180--204.

\bibitem{DBLP:conf/pods/KolaitisV98}
{\sc P.~G. Kolaitis and M.~Y. Vardi}, {\em Conjunctive-query containment and constraint satisfaction}, in Proceedings of the Seventeenth {ACM} {SIGACT-SIGMOD-SIGART} Symposium on Principles of Database Systems, June 1-3, 1998, Seattle, Washington, {USA}, {ACM} Press, 1998, pp.~205--213.

\bibitem{DBLP:books/daglib/0023091}
{\sc D.~Koller and N.~Friedman}, {\em Probabilistic Graphical Models - Principles and Techniques}, {MIT} Press, 2009.

\bibitem{DBLP:journals/pvldb/LeisGMBK015}
{\sc V.~Leis, A.~Gubichev, A.~Mirchev, P.~A. Boncz, A.~Kemper, and T.~Neumann}, {\em How good are query optimizers, really?}, Proc. {VLDB} Endow., 9 (2015), pp.~204--215.

\bibitem{Maier:1983:TRD:1097039}
{\sc D.~Maier}, {\em Theory of Relational Databases}, Computer Science Pr, 1983.

\bibitem{DBLP:journals/jacm/Marx13}
{\sc D.~Marx}, {\em {Tractable Hypergraph Properties for Constraint Satisfaction and Conjunctive Queries}}, J. {ACM}, 60 (2013), pp.~42:1--42:51.

\bibitem{DBLP:conf/pods/NgoPRR12}
{\sc H.~Q. Ngo, E.~Porat, C.~R{\'{e}}, and A.~Rudra}, {\em Worst-case optimal join algorithms: [extended abstract]}, in Proceedings of the 31st {ACM} {SIGMOD-SIGACT-SIGART} Symposium on Principles of Database Systems, {PODS} 2012, Scottsdale, AZ, USA, May 20-24, 2012, {ACM}, 2012, pp.~37--48.

\bibitem{10.1145/3180143}
{\sc H.~Q. Ngo, E.~Porat, C.~R\'{e}, and A.~Rudra}, {\em Worst-case optimal join algorithms}, J. ACM, 65 (2018).

\bibitem{DBLP:journals/sigmod/NgoRR13}
{\sc H.~Q. Ngo, C.~R{\'{e}}, and A.~Rudra}, {\em Skew strikes back: new developments in the theory of join algorithms}, {SIGMOD} Rec., 42 (2013), pp.~5--16.

\bibitem{DBLP:books/daglib/0011128}
{\sc R.~Ramakrishnan and J.~Gehrke}, {\em Database management systems {(3.} ed.)}, McGraw-Hill, 2003.

\bibitem{DBLP:conf/icdt/Veldhuizen14}
{\sc T.~L. Veldhuizen}, {\em Triejoin: {A} simple, worst-case optimal join algorithm}, in Proc. 17th International Conference on Database Theory (ICDT), Athens, Greece, March 24-28, 2014, OpenProceedings.org, 2014, pp.~96--106.

\bibitem{2021arXiv211201003V}
{\sc S.~{Vikneshwar Mani Jayaraman}, C.~{Ropell}, and A.~{Rudra}}, {\em {Worst-case Optimal Binary Join Algorithms under General $\ell_p$ Constraints}}, arXiv e-prints,  (2021), p.~arXiv:2112.01003.

\bibitem{doi:10.1137/1.9781611977912.134}
{\sc V.~V. Williams, Y.~Xu, Z.~Xu, and R.~Zhou}, {\em New Bounds for Matrix Multiplication: from Alpha to Omega}, pp.~3792--3835.

\bibitem{DBLP:conf/vldb/Yannakakis81}
{\sc M.~Yannakakis}, {\em Algorithms for acyclic database schemes}, in Very Large Data Bases, 7th International Conference, September 9-11, 1981, Cannes, France, Proceedings, {IEEE} Computer Society, 1981, pp.~82--94.

\bibitem{yuster2004detecting}
{\sc R.~Yuster and U.~Zwick}, {\em Detecting short directed cycles using rectangular matrix multiplication and dynamic programming.}, in SODA, vol.~4, 2004, pp.~254--260.

\bibitem{10.5555/1622756.1622765}
{\sc N.~L. Zhang and D.~Poole}, {\em Exploiting causal independence in bayesian network inference}, J. Artif. Int. Res., 5 (1996), p.~301–328.

\bibitem{zhang1994simple}
{\sc N.~X. Zhang and D.~Poole}, {\em A simple approach to bayesian network computations}, in Proceedings of the Canadian AI Conference, Springer, 1994, pp.~207--214.

\bibitem{DBLP:journals/tit/ZhangY97}
{\sc Z.~Zhang and R.~W. Yeung}, {\em A non-shannon-type conditional inequality of information quantities}, {IEEE} Trans. Information Theory, 43 (1997), pp.~1982--1986.

\bibitem{zhang1998characterization}
{\sc Z.~Zhang and R.~W. Yeung}, {\em On characterization of entropy function via information inequalities}, IEEE Transactions on Information Theory, 44 (1998), pp.~1440--1452.

\end{thebibliography}


\end{document}